\newtheorem{assumption}{Assumption}
\renewenvironment{proof}[1]{{\em #1}\hspace{1em}}{\hfill$\Box$\medskip\medskip}
\newcommand{\cB}{{\mathcal{B}}}
\def\eps{{\epsilon}}
\def\H{{\rm H}}
\def\H{\cal H}
\def\M{{\cal M}}
\def\E{\mbox{E}}
\def\N{{\mathcal{N}}}
\def\Rev{{\mbox{\sc Revenue}}}
\def\rev1{\mbox{\sc rev}_1}
\def\rs{{r^{\star}_s}}
\def\rL{{r^{\star}_L}}
\def\rH{{r^{\star}_H}}
\begin{document}

\authorrunning{Kanoria  Nazerzadeh}

\author{Yash Kanoria \ \ \ \ \ \ \ \ \ \ \ \ \ \ \ \ \ \ \  Hamid Nazerzadeh}


\institute{Columbia University \ \ \ \ \ \  University of Southern California}

\title{Dynamic Reserve Prices for Repeated Auctions: \\ Learning from Bids\thanks{This paper appeared as a one page abstract in the proceedings of \emph{Web and Internet Economics $($WINE $)$}, 2014, pp. 232--232. A subsequent paper by the same authors, ``Incentive-Compatible Learning of Reserve prices for Repeated Auctions,'' to appear in \emph{Operations Research}, features a different model (with bidders drawing i.i.d. valuations across rounds) for a similar problem.}}

%
%
%
%

\maketitle

\medskip
\makebox[\linewidth]{July 2014\phantom{xxxxxxx}}

\begin{abstract}
A large fraction of online advertisement is sold via repeated second price auctions. In these auctions, the reserve price is the main tool for the auctioneer to boost revenues. In this work, we investigate the following question: Can changing the reserve prices based on the previous bids improve the revenue of the auction, taking into account the long-term incentives and strategic behavior of the bidders? We show that if the distribution of the valuations is known and satisfies the standard regularity assumptions, then the optimal mechanism has a constant reserve. However, when there is uncertainty in the distribution of the valuations, previous bids can be used to learn the distribution of the valuations and to update the reserve price. We present a simple, approximately incentive-compatible, and asymptotically optimal dynamic reserve mechanism that can significantly improve the revenue over the best static reserve.
\end{abstract}

\section{Introduction} \label{sec:intro}
Advertising is the main component of monetization strategies of most Internet companies. A large fraction of online advertisements are sold via advertisement exchanges platforms such as Google's Doubleclick (Adx) and Yahoo!'s Right Media.\footnote{Other examples of major ad exchanges include Rubicon, AppNexus, and OpenX.}  Using these platforms, online publishers such as the New York Times and the Wall Street Journal sell the advertisement space on their webpages to advertisers. The advertisement space is allocated using auctions where advertisers bid in real time for a chance to show their ads to the users.
Every day, tens of billions of online ads are sold via these exchanges~\citep{Muthu10,McAfee11,BalseiroFMM11,CelisLMN14}.

The second-price auction is the dominant mechanism used by the advertisement exchanges.
Among the reasons for such prevalence are the simplicity of the second-price auction and the fact that it incentivizes the advertisers to be truthful. The second price auction maximizes the social welfare (i.e., the value created in the system) by allocating the item to the highest bidder.

In order to maximize the revenue in a second price auction, the auctioneer can set a reserve price and not make any allocations when the bids are low. In fact, under symmetry and regularity assumptions (see
Section~\ref{sec:model}),  the second-price auction with an appropriately chosen reserve price is optimal and maximizes the revenue among all selling mechanism~\citep{Myerson81,RileySamuelson81}.

However, in order to set the reserve price effectively, the auctioneer requires information about distribution of the valuations of the bidders. A natural idea, which is widely used in practice, is to construct these distributions using the history of the bids. This approach, though intuitive, raises a major concern with regards to long-term (dynamic) incentives of the advertisers. Because the bid of an advertiser may determine the price he or she pays in future auctions, this approach may result in the advertisers shading their bids and ultimately in a loss of revenue for the auctioneer.

To understand the effects of changing reserve prices based on the previous bids, we  study a setting where the auctioneer sells impressions (advertisements space) via repeated second price auctions. We demonstrate that the long-term incentives of advertisers plays an important role in the performance of these repeated auctions by showing that under standard symmetry and regularity assumptions (i.e., when the valuations of are drawn independently and identically from a regular distribution), the optimal mechanism is running a second price auction with a {\em constant} reserve and changing the reserve prices over time is not beneficial.
However, when there is {\em uncertainty} in the distribution of the valuations, we show that there can be substantial benefit in learning the reserve prices using the previous bids.



More precisely, we consider an auctioneer selling multiple copies of an item  sequentially. The item is either a high type or  a low type. The type determines the distribution of the valuations of the bidders. The type of the item is not a-priori known to the auctioneer. Broadly, we show the following: when there is competition between bidders and the valuation distributions for the two types are sufficiently different from each other, there is a simple dynamic reserve mechanism that can effectively ``learn" the type of the item, and thereafter choose the optimal reserve for that type.\footnote{On the other hand, when the valuation distributions for the two types are close to each other, the improvement from changing the reserve is insignificant.} As a consequence, the dynamic reserve mechanism does much better than the best fixed reserve mechanism, and in fact, achieves near optimal revenue, while retaining (approximate) incentive compatibility.\footnote{Approximate incentive compatibility implies that the agent behave truthfully if the gain from deviation is small; see Section~\ref{sec:model}.}

%
To this end, we propose a simple mechanism called the {\em threshold} mechanism. In each round, the mechanism implements a second price auction with reserve. The reserve price starts at some value, and stays there until there is a bid exceeding a pre-decided threshold, after which the reserve rises (permanently) to a higher value.


We compare the revenue of our mechanism with two benchmarks. Our baseline is the static second price auction with the optimal constant reserve. Our upper-bound benchmark is the optimal mechanism that knows the type of the impressions (e.g., high or low) in advance. These two benchmarks are typically well separated. We show that the threshold mechanism is near optimal and obtains revenue close to the upper-bound benchmark. In addition, we present numerical illustrations of our results that show up to $23\%$ increase in revenue by our mechanism compared with the static second price auctions. These examples demonstrate the effectiveness of dynamic reserve prices under fairly broad assumptions.




\subsection{Related Work}
In this section, we briefly discuss the closest work to ours in the literature along different dimensions starting with the application in online advertising. 

\cite{OstrovskyS09} conducted a large-scale field experiment at Yahoo! and showed that  choosing reserve prices, guided by the theory of optimal auctions, can significantly increase the revenue of sponsored search auctions. To mitigate the aforementioned incentive concerns, they dropped the highest bid from each auction when estimating the distribution of the valuations. However, they do not formally discuss the consequence of this approach. 

Another common solution offered to mitigate the incentive constraints is to bundle different types of impressions (or keywords) together so that the bid of each advertiser would have small impact on the aggregate distribution learned from the history of bids. However, this approach may lead to significant estimation errors and setting a sub-optimal reserve.

To the extent of our knowledge, ours  is the first work that rigorously studies the long-term and dynamic incentive issues in repeated auctions with dynamic reserves. 

\cite{IyerJS11} and \cite{BalseiroBW13} demonstrate the importance of setting reserve prices in dynamic setting in environments where agents are uncertain about their own valuations, and respectively, are budget-constrained. We discuss the methodology of these papers in more details at the end of Section~\ref{sec:DIC}. \cite{McAfeeMR89,McAfeeV92} determine reserve prices in common value settings.

Our work is closely related to the literature on behavior-based pricing strategies where the seller changes the prices for a buyer (or a segment of the buyers) based on her previous behavior; for instance, increasing the price after a purchase or reducing the price in the case of no-purchase; see~\cite{FundenbergV07,Esteves09} for surveys.

The common insight from the literature is that the optimal pricing strategy is to commit to a single price over the length of the horizon~\citep{Stokey79,Salant89,HartT88}. In fact, when customers anticipate reduction in the future prices, dynamic pricing may hurt the seller's revenue~\citep{Taylor04,Villas-Boas04}. Similar insights are obtained in environments where the goal is to sell a fixed initial inventory of products to unit-demand buyers who arrive over time~\citep{AvivP08,DasuT10,AvivWZ13,CorreaMT13}.

There has been renewed interest in behavior-based pricing strategies, mainly motivated by the development in e-commerce technologies that enables online retailers and other Internet companies to determine the price for the buyer based on her previous purchases. \cite{AcuistiV05} show that when a sufficient proportion of customers are myopic or when the valuations of customers increases (by providing enhanced services) dynamic pricing may increase the revenue. Another setting where dynamic pricing could boost the revenue is when the seller is more patient than the buyer and discounts his utility over time at a lower rate than the buyer~\citep{BikhchandaniM12,AminRS13}. See~\cite{Taylor04,ConitzerTW12} for privacy issues and anonymization approaches in this context. \\
In contrast with these works, our focus is on auction environments and we study the role of competition among strategic bidders.


The problem of learning the distribution of the valuation and optimal pricing also have been studied in the context of revenue management and pricing for markets where each (infinitesimal) buyer does not have an effect on the future prices and demand curve can be learned with optimal regret~\citep{BesbesZ09,BesbesZ12,HarrisonKZ12,BoerZ13,Segal03}.
In this work, we consider a setting where the goal is to learn the optimal reserve price with strategic and forward looking buyers, with multi-unit demand, where the action of each buyer can change the prices in the future.





\paragraph{Organization}
The remaining of the paper is organized as follows. In Section~\ref{sec:model}, we formally present the model followed by the description of the threshold mechanisms in Section~\ref{sec:threshold}. We show that the mechanism is dynamic incentive compatible in Section~\ref{sec:DIC}. In Sections~\ref{sec:general}, we present an extension of the threshold mechanism.

\section{Model and Preliminaries} \label{sec:model}
%

A seller auctions off $T>1$ items to $n\ge 1$ agents in $T$ rounds of second price auctions, numbered $t=1,2, \ldots, T$. The items are of type high or low denoted by $s \in \{L, H\}$, where informally we think of an item of type $H$ as being more valuable than an item of type $L$. The items are \emph{all of the same type}. The type is $s$ with probability $p_s$.

The valuation of agent $i \in \{1,\ldots, n\}$ for an item of type $s$, denoted by $v_i$, is drawn independently and identically from distribution $F_s$, i.e., the valuations are i.i.d. conditioned on $s$.
Note that agents' valuations  are identical in each round (if they participate, see below). In Section~\ref{sec:transient} we consider an extension of our model where the valuations of the agents may change over time. 

%

Each agent participates in each auction with probability $\alpha_i$ exogenously and independently across rounds and agents. One can think of $\alpha_i$'s as throttling probabilities~\footnote{Due to budget and bandwidth constraints or other considerations, online advertising platforms often randomly select a subset of bidders, from all eligible advertisers, to participate in the auction. This process is referred to as throttling~\citep{GoelMNS10,Charlesetal13}.} or matching probabilities to a specific user demographic~\citep{CelisLMN14}. Let $X_{it}$ be the indicator random variable corresponding to the participation of agent $i$ in auction at time $t$. Note that $\alpha_i = \E[X_{it}]$. We denote the realization of $X_{it}$ by $x_{it}$. Agent $i$ learns $x_{it}$ at the beginning of round $t$.
In particular, our (incentive compatibility) results hold  in the special case when all the agents participate in all the auctions, i.e., $\alpha_i=1$ for all $i$. Participation probabilities allow us to model environments where a small number of bidders participate in auctions (cf.~\cite{CelisLMN14}).
%

\paragraph{Information Structure}
We assume $T$, $p_L$, and $p_H$ to be common knowledge. We also assume that the type of the item $s$ is common knowledge among the agents but unknown to the auctioneer, who only knows $p_s$. This assumption is motivated in part by the application where sometimes advertisers may have more information about the value of a user or an impression than the publisher. Also, it corresponds to a stronger requirement for the incentive compatibility of the mechanism; hence, our results remain valid if the agents have the same information as the seller about the type of the item. Similarly, we assume that $\alpha_i$'s are common knowledge among the agents and the auctioneer. Our mechanism remains incentive compatible, as defined below, if the agents have incomplete information about the $\alpha_i$'s.
At the beginning, each agent $i$ knows his own valuation, $v_i$, but not the other agents' valuations (but agents may make inferences about the valuations of the other agents over time).

%

\medskip
Let us now consider the seller's problem. The seller aims to maximize her expected revenue via a repeated second price auction.


\paragraph{\bf A ``generic" dynamic second price mechanism} 
At time $0$, the auctioneer announces the reserve price function $\Omega: \H \rightarrow \mathbb{R}^+$ that maps the history observed by the mechanism to a reserve price. The history observed by the mechanism up to time $\tau$, denoted by $H_{\Omega,\tau}\in \H$, consists of, for each round $t< \tau$, the reserve price, the agents participating in round $t$ and their bids, and the allocation and payments at that round. More precisely, $$H_{\Omega,\tau} =~\langle (r_1,x_{1},b_{1},q_{1},p_{1}), \cdots,(r_{\tau-1},x_{\tau-1},b_{\tau-1},q_{\tau-1},p_{\tau-1})\rangle$$ where
\begin{itemize}
\item $r_t$ is the reserve price at time $t$.

\item $x_t=~\langle x_{1t},\cdots,x_{nt}\rangle$. Recall that $x_{it}$ is equal to $1$ if agent $i$ participates in the auction for item $t$. 

\item $b_t=~\langle b_{1t},\cdots,b_{nt}\rangle$ where $b_{it}$ denotes the bid of agent $i$ at time $t$. We assign $b_{it} = \phi$ if $x_{it}=0$, i.e., if agent $i$ does not participate in round $t$.

\item $q_t$ corresponds to the allocation vector. Since the items are allocated via the second price auction with reserve $r_t$, if all the bids are smaller than $r_t$, the item is not allocated. Otherwise, the item is allocated uniformly at random to an agent $i^{\star}\in\arg\max_{i}\{b_{it}\}$ and we have $q_{i^{\star}}=1$. For all the agents that did not receive the item, $q_{it}$ is equal to $0$.

\item $p_t$ is the vector of payments. If $q_{it}=0$, then $p_{it}=0$
and if $q_{it}=1$, then $p_{it}=\max\left\{\max_{j\neq i}\left\{b_{jt}\right\},r_t\right\}$.
\end{itemize}
Note that in our notation, $\Omega$ includes a reserve price function for each $t = 1, 2, \ldots, T$. The length of the history $\H$ implicitly specifies the round for which the reserve is to be computed.

An important special case is a {\em static mechanism} where the reserve is not a function of the previous bids and allocations.

We can now define the seller problem more formally. The seller chooses a reserve price function $\Omega$ that maximizes the expected revenue, which is equal to $\E\left[\sum_{t=1}^{T} \sum_{i=1}^n p_{it}\right]$, when the buyers play an equilibrium with respect to the choice of $\Omega$. In order to define the utility of the agents, let $H_{ik}$ denote the history observed by agent $i$ up to time $t$ including  the allocation and payments of (only) agent $i$. Namely, $$H_{ik}=~\langle (r_1,x_{i,1},b_{i,1},q_{i,1},p_{i,1}), \cdots,(r_{t-1},x_{i,t-1},b_{i,t-1},q_{i,t-1},p_{i,t-1})\rangle.$$
Next, we state precise definitions of a bidding strategy and a best response.

\begin{definition}[Bidding Strategy]
\emph{Bidding strategy $B_i: \mathbb{R} \times {\H}_i \times \mathbb{R} \rightarrow \mathbb{R}$ 
of agent $i$ maps the valuation of the agent $v_i$, history $H_{it}$, and the reserve $r_t$ at time $t$ to a bid $b_{it}=B_i(v_i,H_{it},r_t)$. Here ${\H}_{i}$ is the set of possible histories observed by agent $i$.}
\end{definition}

\begin{definition}[Best-Response] \label{def:best-response}
\emph{Given strategy profile $<B_1, B_2, \cdots, B_n>$, $B_i$ is a best-response strategy to the strategy of other agents $B_{-i}$, if, for all $s$ and $v_i$ in the support of $F_s$, it maximizes the expected utility of agent $i$, $$U_i(v_i, s, B_{i}, B_{-i})=\E\left[\sum_{t=1}^{T} v_i q_{it} - p_{it}\right],$$ where the expectation is over the valuations of other agents, the participation variables $x_{jt}$'s, and any randomization in bidding strategies.
Strategy $B_i$ is an $\epsilon$-best-response if, for all $v_i$ in the support,
$$U_i(v_i, s, B_{i}, B_{-i}) \geq U_i(v_i, s, \textup{BR}(v_i,s, B_{-i}), B_{-i}) + T \alpha_i \eps\, ,$$
where $\textup{BR}$ denotes a best response.}
\end{definition}

A mechanism is incentive compatible if, for each agent $i$, the truthful strategy is a best-response to the other agents being truthful. In this paper, we consider the notion of approximate incentive compatibility that implies that an agent does not deviate from the truthful strategy when the benefit from such deviation is insignificant. This notion is appealing when characterizing, or computing, the best response strategy is challenging and has been studied for static games (cf.~\cite{DaskalakisMP09,ChienS11,KearnsM02,FederNS07,HemonRS08}) as well as dynamic games, such as ours, where finding the best response strategy of an agent corresponds to solving a complicated (stochastic) dynamic program~\citep{IyerJS11,BalseiroBW13,GummadiKP13,NazerzadehSV13}.


\begin{definition}[Approximate Incentive Compatibility] \label{def:IC}
{\em A mechanism is $\epsilon$-incentive compatible if the truthful strategy of agent $i$ is an $\epsilon$-best-response to the truthful strategy of other agents for all $s$ and all $v_i$ in the support of $F_s$.}
\end{definition}

Note that $\alpha_i T$ is the expected number of rounds in which agent $i$ participates. Therefore, under an $\eps$-incentive compatible mechanism, on average the agent loses at most $\eps$ in utility, relative to playing a best response, per-round of participation.

We now define a stronger notion of incentive compatibility. In Section~\ref{sec:DIC}, we provide conditions under which our proposed mechanism satisfies these stronger notion.
By a \emph{realization}, denoted by  $(v_i,x_i^T)_{i=1}^n$, we refer to a valuation vector $(v_1, v_2, \ldots, v_n)$ along with a participation vector $(x_{1}^{T}, x_{2}^T, \ldots, x_{n}^T)$.

\begin{definition}[Dynamic Incentive Compatibility] \label{def:DIC}
{\em We call a realization $\eps$-good with respect to a mechanism, if truthfulness, for each agent $i$ and in {\em each round} $t\in\{1,2, \ldots, T\}$, remains an (additive) $\eps\alpha_i(T-t)$-best-response to the truthful strategy of the other agents.
We say that a mechanism is \emph{$(\delta,\epsilon)$-dynamic-incentive-compatible} if
 the probability of the realization being $\eps$-good with respect to the mechanism is at least $1-\delta$.
 }
\end{definition}

Thus, in a $(\delta,\epsilon)$-dynamic-incentive-compatible mechanism, assuming truthful bidding, with probability at least $1-\delta$ the realization satisfies the following property: for  each bidder and each round that the bidder participates, the average cost of truthful bidding is at most $\eps$ for each future round that he may participate in, relative to a best response. The above definition extends the notion of (exact) interim dynamic incentive compatibility~\citep{BergemannV10} which implies that the agents will not deviate from the truthful strategy even as they obtain more information over time.

\subsection*{Benchmarks}
In the next section, we propose a simple approximately incentive compatible mechanism for the setting described above.
We compare our proposed mechanism with two benchmarks that provide a lower-bound and an upper-bound on the revenue of the best dynamic second price mechanism.

The lower-bound mechanism, which we refer to as the {\em static mechanism}, at each step, implements a second price auction with a constant reserve $r_0$. The reserve is chosen at time $0$ before the mechanism observes any of the bids and does not change over time.\footnote{
Since the valuations of the agents are correlated through the type of the items, finding the optimal static auction is challenging and could be computationally intractable~\citep{PapadimitriouP11}. \cite{CremerM88} proposed a mechanism that can extract the whole surplus if the valuations are correlated, however, their mechanism is not practical and does not satisfies the desirable ex-post individual rationality property; also see Section~\ref{sec:transient}.}

For the upper-bound, we consider the optimal $T$-round mechanism that knows the type of the items.
\begin{lemma}[Upper-bound] \label{lem:upper_bound}
Let $\M_T^s$ be the optimal $T$-round mechanism that knows the type of the item, $s$. Similarly, $\M_1^s$ corresponds to the optimal (static) mechanism when $T=1$. Then, the revenue of $\M_T^s$, denoted by $\Rev(\M_T^s)$, is bounded by $T\times \Rev(\M_1^s)$. Furthermore, if mechanism $\M_1^s$ is ex-post incentive compatible, then, $\M_T^s$ can be implemented by repeating mechanism $\M_1^s$ at each step $t=1,\cdots,T$.
\end{lemma}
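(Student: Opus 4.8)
The plan is to prove the two claims separately. For the revenue bound $\Rev(\M_T^s) \le T \cdot \Rev(\M_1^s)$, the idea is a standard averaging/single-round extraction argument. Fix the type $s$, so that all valuations are i.i.d.\ from $F_s$. Given any $T$-round mechanism $\M$ that knows $s$ and is feasible (i.e.\ Bayesian incentive compatible and individually rational against agents who know $s$), I would construct, for each round $t \in \{1,\dots,T\}$, a single-round mechanism $\M^{(t)}$ that simulates $\M$: it draws fictitious valuations and participation variables for the "other" rounds from the correct (type-$s$) distributions, runs $\M$ on this augmented instance, and outputs the allocation and payment that $\M$ assigns in round $t$. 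Because the agents' valuations are fixed across rounds conditioned on $s$, the truthful report in round $t$ of $\M^{(t)}$ is exactly the relevant report in $\M$, so incentive compatibility and individual rationality are inherited (in expectation over the simulated rounds). Hence each $\M^{(t)}$ is a feasible single-round mechanism, so $\Rev(\M^{(t)}) \le \Rev(\M_1^s)$ by optimality of $\M_1^s$. Summing over $t$ and using linearity of expectation, $\Rev(\M) = \sum_{t=1}^T \Rev(\M^{(t)}) \le T\,\Rev(\M_1^s)$. Taking $\M = \M_T^s$ gives the bound.

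For the second claim, I would argue that repeating $\M_1^s$ at each round is itself a feasible $T$-round mechanism achieving revenue exactly $T \cdot \Rev(\M_1^s)$, which by the first part is the maximum possible, so it must coincide (in revenue) with $\M_T^s$. The per-round revenue is clearly $\Rev(\M_1^s)$ by construction. The content is checking feasibility: I need that truthful bidding in every round is a best response in the repeated mechanism. Here is where ex-post incentive compatibility of $\M_1^s$ is used: since the round-$t$ allocation and payment depend only on the round-$t$ reports and $\M_1^s$ is ex-post IC, a deviation in round $t$ cannot help in round $t$ regardless of others' reports, and — crucially — since the mechanism in each round ignores the history (it just re-runs $\M_1^s$ on fresh reports), a deviation in round $t$ has no effect on any other round. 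So the repeated game decouples across rounds and truthfulness is a (dominant-strategy, hence Bayesian) equilibrium. Individual rationality likewise follows round by round.

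The main obstacle is the first part — specifically, making the simulation argument airtight given the correlation structure. Because valuations are perfectly correlated across rounds (each agent has a single $v_i$ used in every round), the fictitious "other rounds" that $\M^{(t)}$ simulates must reuse agent $i$'s \emph{reported} value, not an independent draw; only the participation variables $X_{it}$ and the other agents' values need to be freshly drawn (and the latter are i.i.d.\ given $s$, so simulating them is legitimate). I would need to verify that with this coupling the incentive and participation constraints of $\M^{(t)}$ reduce exactly to those of $\M$ restricted to agent $i$'s deviations, so that no feasibility is lost. A secondary subtlety is pinning down precisely which class of mechanisms "optimal mechanism that knows the type" ranges over (the natural choice is Bayesian IC and interim IR against type-aware agents), and confirming the simulated mechanism lands in the corresponding single-round class; once the definitions are fixed, this is routine.
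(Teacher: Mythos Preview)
Your Part~2 argument is correct and matches the paper's. The gap is in Part~1: the per-round mechanisms $\M^{(t)}$ you construct are \emph{not} individually incentive compatible in general, so you cannot conclude $\Rev(\M^{(t)}) \le \Rev(\M_1^s)$ for each $t$ separately. Incentive compatibility of $\M_T^s$ only says that the \emph{sum} $\sum_{t} u_i^{(t)}(v_i,b_i)$ is maximized at $b_i=v_i$; it says nothing about any individual summand. Since, as you correctly note, the agent's single report $b_i$ in $\M^{(t)}$ is fed into \emph{all} $T$ rounds of the simulation (valuations are constant across rounds), a misreport can raise the round-$t$ payoff at the expense of the other rounds that $\M^{(t)}$ then throws away. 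So the step ``the incentive constraints of $\M^{(t)}$ reduce exactly to those of $\M$'' is precisely what fails, not merely what needs checking.

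The repair is to average rather than sum: build one single-round mechanism $\tilde{\M}$ that draws $t$ uniformly from $\{1,\dots,T\}$ and then runs your $\M^{(t)}$. An agent's interim utility in $\tilde{\M}$ is then exactly $1/T$ times her total interim utility in $\M_T^s$ (for any report $b_i$), so IC and IR are inherited, and $\Rev(\tilde{\M})=\Rev(\M_T^s)/T \le \Rev(\M_1^s)$. This is the paper's route: it builds a single averaged one-round mechanism and argues the $1/T$ scaling of utilities, with some extra bookkeeping to condition on the realized participation set $\cB$. Your decomposition is one averaging step away from working.
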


We prove the first claim in the appendix using a reduction argument that reduces a mechanism in a $T$-round setting to a mechanism in a single-round.
%
If mechanism $\M_1^s$ is ex-post incentive compatible, the leakage of information from one round to another does not change the strategy of the bidders.
Recall that for private value settings, ex-post incentive compatibility  implies that truthfulness is a (weakly) dominant strategy for each agent for any realizations of other agents' valuations --- the second price auction with reserve satisfies this property.

Through out this paper, we make the following standard regularity assumption (cf.~\cite{Myerson81}).
%
%
\begin{assumption}[Regularity] \label{ass:regular}
Distribution $F_s$, $s\in\{L,H\}$, with density $f_s$, is regular, i.e., c.d.f. $F_s(v)$ and $\left(v-{1-F_s(v)\over f_s(v)}\right)$ are strictly increasing in $v$ over the support of $F_s$.
\end{assumption}
%
%
Examples of regular distributions include many common distributions such as the uniform, Gaussian, log-normal, etc.

If $s$ is known and $F_s$ is regular, then $\M_1^s$, the optimal mechanism for $T=1$, is the second price auction with reserve price $\rs$ that is the unique solution of 
\begin{equation} \label{eq:opt_reserve}
r-{1-F_s(r)\over f_s(r)}=0\, .
\end{equation}
Therefore, by Lemma~\ref{lem:upper_bound}, we obtain the following.


\begin{theorem}[No ``dynamic" improvement with single type] \label{thm:no_improve_n_equal_1}
If the valuations are drawn i.i.d. from a regular distribution (e.g., $s$ is known and $F_s$ is regular), the optimal mechanism is the second price auction with a constant reserve that is the solution of {\em Eq.~(\ref{eq:opt_reserve})} and there is no benefit from having dynamic reserve prices.
\end{theorem}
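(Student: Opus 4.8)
The plan is to derive Theorem~\ref{thm:no_improve_n_equal_1} directly from Lemma~\ref{lem:upper_bound}. The hypothesis that valuations are i.i.d.\ from a regular distribution is exactly the situation in which the type $s$ conveys no residual information: whether $s$ is known or not, conditioned on any history the valuations remain i.i.d.\ from a single regular distribution $F_s$. So without loss of generality we may treat $s$ as fixed and known, and the question reduces to: among all dynamic second price mechanisms (indeed, by Myerson, among all mechanisms), what is the revenue-optimal one when valuations are i.i.d.\ from a regular $F$?

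\medskip
\noindent
First I would invoke the single-round optimality result: when $F_s$ is regular, Myerson's theorem (cited as \cite{Myerson81,RileySamuelson81}) tells us that $\M_1^s$, the optimal single-round mechanism, is the second price auction with reserve $\rs$ solving Eq.~(\ref{eq:opt_reserve}), and this mechanism is ex-post incentive compatible (truthfulness is weakly dominant in a second price auction with reserve, as noted in the excerpt). Second, I would apply Lemma~\ref{lem:upper_bound}: the optimal $T$-round mechanism $\M_T^s$ that knows $s$ has revenue at most $T\cdot\Rev(\M_1^s)$, and moreover, since $\M_1^s$ is ex-post IC, this bound is achieved by simply repeating $\M_1^s$ in each of the $T$ rounds. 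Repeating a fixed-reserve second price auction is precisely a static mechanism with constant reserve $\rs$, so the upper bound $T\cdot\Rev(\M_1^s)$ is attained by a constant-reserve mechanism.

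\medskip
\noindent
Third, I would observe that any dynamic second price mechanism in the original model (where $s$ is unknown to the seller) is in particular a mechanism available to a seller who additionally learns $s$; hence its revenue is at most $\Rev(\M_T^s)$ in expectation over $s$, which by the above is at most $\sum_s p_s\, T\,\Rev(\M_1^s)$ --- and this is matched by the static mechanism with reserve $\rs$ conditional on type (or, when the two types coincide as a single regular distribution, by a single constant reserve). Thus no dynamic reserve policy can beat the best constant reserve, and since regularity makes the constant reserve $\rs$ of Eq.~(\ref{eq:opt_reserve}) optimal in each round, the claim follows.

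\medskip
\noindent
The main obstacle I anticipate is not any deep calculation but rather making the ``without loss of generality, $s$ is known'' step rigorous: one must argue that a strategic bidder's equilibrium behavior and the seller's revenue in the unknown-$s$ model are dominated by what is achievable in the known-$s$ model, so that the upper bound transfers. This is exactly what Lemma~\ref{lem:upper_bound} is set up to provide (the optimal mechanism knowing $s$ upper-bounds everything), so the argument is really just a careful chaining of that lemma with Myerson's single-round characterization; the only subtlety is confirming that the ``repeat $\M_1^s$'' construction indeed remains incentive compatible across rounds, which holds because ex-post IC in each round means information leakage between rounds cannot create a profitable deviation.
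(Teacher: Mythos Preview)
Your proposal is correct and follows essentially the same route as the paper: the paper derives the theorem immediately from Lemma~\ref{lem:upper_bound} together with Myerson's single-round characterization (noting that $\M_1^s$ is the second price auction with reserve $\rs$ and is ex-post IC, so repeating it attains the bound $T\cdot\Rev(\M_1^s)$). Your write-up simply unpacks this chaining in more detail than the paper does; the third paragraph on transferring the bound from the known-$s$ to the unknown-$s$ model is a reasonable elaboration but is not needed here since the theorem's hypothesis already places us in the single-type (known-$s$) setting.
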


The theorem above is similar to the previous results in the literature for settings with a single buyer~\citep{Stokey79,Salant89,HartT88,AcuistiV05} and generalizes their insights to auction environments with multiple buyers.

In the next section, we preset a simple mechanism that exploits correlations between valuations (via types $s$) and the competition among bidders to extract higher revenue than the static mechanism and in fact, for a broad class of distributions of valuations, obtains revenue close to the upper-bound benchmark. Further, the mechanism is approximately incentive compatible.

\section{The Threshold Mechanism} \label{sec:threshold}
In this section, we present the class of threshold mechanisms.

\medskip
A \emph{\bf threshold mechanism} is defined by three parameters and is denoted by $\M(\rho,r_L,r_H)$ where $r_L$ is the initial reserve price. The reserve stays $r_L$ until any of the agents bid above $\rho$, then for all subsequent rounds, the reserve price will increase to $r_H$. If there are no bids above $\rho$, the reserve stays $r_L$ until the end.
\medskip

As we demonstrate in the following, this class of mechanisms (and a generalization of it, presented in Section~\ref{sec:general}, include good candidates for boosting revenue if the modes of $F_L$ and $F_H$ are sufficiently well separated. The idea is to choose $\rho$ such that the valuation of an agent is unlikely to be above $\rho$ if $s=L$, whereas, a valuation exceeding $\rho$ is quite likely if $s=H$. Moreover, as we establish, truthful bidding forms an approximate equilibrium in this case, so for almost all realizations, the mechanism does correctly infer $s$.

%

To convey the intuition behind our incentive compatiblity results, we start with the following (warm up) proposition.
\begin{proposition}
\label{prop:truthful_equilibrium}
Suppose that $F_{L}$ is supported on $[0,\bar{L})$ and $F_H$ is supported on $[\underline{H},\infty)$ for $\bar{L} <\underline{H}$ and $\alpha_i=1$ for at least 2 agents. Consider any $r_L < r_H$. Then, $\M(\rho,r_L,r_H)$, for any $\rho \in (\bar{L},\underline{H})$, is incentive compatible.
\end{proposition}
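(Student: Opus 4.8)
The plan is to show that under the stated support conditions, truthful bidding is an exact dominant strategy for every agent in every round, which immediately yields incentive compatibility. The crucial observation is that the supports $[0,\bar L)$ and $[\underline H,\infty)$ are disjoint and separated by $\rho$, so with probability one the realized bids reveal the type $s$ perfectly: if $s=L$ every valuation lies below $\bar L<\rho$, and if $s=H$ every valuation lies above $\underline H>\rho$. Consequently, any agent $i$ already knows $s$ (it is common knowledge among agents), and moreover knows that the other agents' bids, if truthful, will or will not cross $\rho$ regardless of his own action --- since there are at least two agents with $\alpha_i=1$, in the $H$ case some \emph{other} agent is present and bids above $\rho$, so agent $i$'s bid has no influence on whether the reserve ever rises.

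First I would fix an agent $i$, a type $s$, and a valuation $v_i$ in the support of $F_s$, and condition on the realized valuations and participation of the other agents (dominant-strategy reasoning, so this conditioning is legitimate). I would argue that the sequence of reserves $(r_1,\dots,r_T)$ that agent $i$ faces is \emph{completely determined by the other agents' truthful bids and is unaffected by agent $i$'s own bids}: in the $s=L$ case no truthful bid (including a hypothetical truthful bid of agent $i$) exceeds $\rho$, so the reserve stays $r_L$ forever; a deviation by agent $i$ to some bid above $\rho$ could only \emph{raise} the reserve, which can only hurt him. In the $s=H$ case, since at least two agents always participate, there is an agent $j\ne i$ present in every round with $v_j\ge\underline H>\rho$, so the threshold is crossed in round $1$ regardless of agent $i$; hence agent $i$ faces reserve $r_H$ from round $1$ on no matter what he bids.

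Given that the reserve path is exogenous to agent $i$, the repeated auction decomposes: agent $i$'s total utility is a sum over rounds of his utility in a single second-price auction with a fixed, exogenously given reserve, in which his opponents bid fixed amounts. In each such static second-price auction with reserve, truthful bidding is a (weakly) dominant strategy --- this is the standard ex-post incentive compatibility of the second-price auction with reserve, which the excerpt already invokes. Summing over the $T$ rounds, truthful bidding maximizes agent $i$'s expected utility against any behavior of the others, so in particular it is a best response to the others being truthful; this is exactly Definition~\ref{def:IC} with $\epsilon=0$.

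I do not anticipate a serious obstacle here; the content is entirely in the separation argument. The one point requiring a little care is the role of the ``at least $2$ agents with $\alpha_i=1$'' hypothesis: it is needed precisely so that in the $s=H$ case agent $i$ cannot, by \emph{under}bidding (bidding below $\rho$), prevent the reserve from rising and thereby pay a lower price in later rounds --- the presence of a guaranteed second high bidder forecloses that manipulation. I would make sure to spell out this case explicitly (deviation downward in the $H$ world, deviation upward in the $L$ world) so that the ``dominant strategy in every round'' claim is airtight, and then note that exact incentive compatibility trivially implies the $\epsilon$-notion of Definition~\ref{def:IC}.
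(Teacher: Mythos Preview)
Your approach is essentially the same as the paper's: split on $s$; for $s=L$ truthful bids never cross $\rho$ so the reserve stays $r_L$; for $s=H$ some other always-present agent's truthful bid crosses $\rho$ in round~1, so the reserve path is exogenous to agent~$i$; then invoke myopic optimality of the second-price auction round by round.

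One correction: you overclaim when you write that truthful bidding ``maximizes agent $i$'s expected utility against \emph{any} behavior of the others.'' This is false in the $s=H$ case --- if the other always-present agent shades below $\rho$ in every round, agent $i$ strictly prefers to shade as well and keep the reserve at $r_L$. Your own argument explicitly uses that the other agent bids truthfully (``the other agents' truthful bids''), so what you actually establish is that truthfulness is a best response to others being truthful. That is exactly what the proposition and Definition~\ref{def:IC} require, and it is also exactly what the paper proves; just drop the dominant-strategy framing. A minor slip: the reserve is $r_L$ in round~1 and rises to $r_H$ only from round~2 onward, not ``from round~1 on''; this does not affect the argument since the round-1 reserve is also exogenous to agent~$i$.
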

\begin{proof}{Proof:} 
Consider agent $i$ and round $t$ where $i$ participates (i.e., $x_{it}=1$).
First observe that since bidding truthfully is a (weakly) dominant strategy in the second-price auction, truthfulness is a myopic best-response in our setting.

If $s=L$, then bidding truthfully will not increase the reserve in the future rounds and truthfulness is a (weakly) dominant strategy.
Now suppose $s=H$. If $r=r_H$ in round $t$, then again truthful bidding is a best response, since the reserve will continue to be $r_H$ for the remaining rounds. 
 On the other hand, if $r=r_L$ (and $s=H$), at least one other agent will participate in the auction at time $t$. At the equilibrium, the other agent will bid truthfully, hence above $\rho$, and the reserve will be $r_H$ for the remaining rounds in any case. So bidding truthfully is a best response, since it is myopically a best response.
\end{proof}

We now show that the threshold mechanism is approximately incentive compatible when the support of the distributions overlap and the distribution of the low type is bounded.
We also provide an example that shows a significant boost in the revenue.

\begin{theorem} \label{thm:approx_Nash}
Let $F_L$ be supported on $[0,\bar{L}]$, $\bar{L}<\infty$. 
Let $\rs$ be the solution of $r-{1-F_s(r)\over f_s(r)}=0$.
Consider any positive $\eps < r_H^\star - \rL$. Let $\alpha = \min_i \alpha_i > 0$. Define
\begin{align*}
n_0&\equiv   1+\frac{1.59 \log(2(r_H^\star-r_L^\star)/\eps)}{(1-F_H(\rho))} < \infty \, ,\\
T_0 &\equiv \frac{2(r_H^\star -r_L^\star)}{\eps} \left \lceil \frac{n_0-1}{(n-1) \alpha} \right \rceil \, .
\end{align*}
Consider $\rho \geq \bar{L}$ such that $(1-F_H(\rho))>0$. Then, Mechanism $\M(\rho,\rL,\rH)$ is $\epsilon$-incentive-compatible for all $n\geq n_0$ and $T \geq T_0$. In addition, the expected revenue for each $s \in \{L,H\}$ is at least $(\Rev(\M_1^s) - \eps) T$,
where $\M_1^s$ is the optimal single-round mechanism that knows $s$ in advance and can be obtained using a constant reserve of $\rs$.
%
\end{theorem}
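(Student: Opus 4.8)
The plan is to decompose the argument into two essentially independent parts: (i) the revenue guarantee, and (ii) the $\eps$-incentive-compatibility claim. The revenue part is the easy one. Assuming truthful bidding, observe that when $s=L$ no agent ever bids above $\bar L \le \rho$, so the reserve stays at $\rL$ for all $T$ rounds; since $\rL$ is the optimal single-round reserve for type $L$, each round earns exactly $\Rev(\M_1^L)$ in expectation, giving $T\,\Rev(\M_1^L) \ge (\Rev(\M_1^L)-\eps)T$. When $s=H$, the reserve is $\rL$ until the first round in which some participating agent has valuation exceeding $\rho$, and $\rH$ thereafter. I would argue that the expected number of "pre-switch" rounds is small — it is at most (a constant times) $\lceil (n_0-1)/((n-1)\alpha)\rceil$, because in each round the probability that at least one of the (in expectation $(n-1)\alpha$ or more) participating agents exceeds $\rho$ is bounded below in terms of $1-F_H(\rho)$ and $n$. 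Comparing the per-round revenue under reserve $\rL$ versus the optimal $\rH$, the total shortfall is at most $(\rH^\star - \rL^\star)$ times the expected number of pre-switch rounds, and the choice of $T_0$ is exactly what makes this at most $\eps T$. This pins down where $T_0$ comes from.

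The incentive-compatibility part is the main obstacle and needs more care. Fix agent $i$, a type $s$, a valuation $v_i$ in the support of $F_s$, and suppose all other agents bid truthfully. As in Proposition~\ref{prop:truthful_equilibrium}, the case $s=L$ is immediate: truthful bidding never raises the reserve (no bid can exceed $\rho \ge \bar L$), so truthfulness is dominant. For $s=H$, the only way agent $i$'s current bid affects the future is by triggering the switch from $\rL$ to $\rH$ one round earlier or later; since $\rL<\rH$, the agent would (weakly) prefer to keep the reserve low, so the only profitable deviation to consider is bidding below $\rho$ in order to \emph{delay} the switch. The key quantitative claim is that this delay almost never helps: conditioned on $s=H$, with $n \ge n_0$ agents each participating with probability $\ge \alpha$, the probability that in a given round \emph{no other} participating agent bids above $\rho$ is at most $(F_H(\rho))^{\,\alpha\text{-weighted count}}$, which by the choice of $n_0$ is at most $\eps/(2(\rH^\star-\rL^\star))$ — this is precisely the role of the constant $1.59$ and the logarithm (roughly $1/\log(1/(1-x)) \le \ldots$, or a Bernoulli-type bound). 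Hence in expectation the reserve has already switched to $\rH$ within $O(1)$ rounds regardless of what $i$ does, so the total utility agent $i$ can gain by strategic shading of his bids — which only matters across rounds where the reserve would still be $\rL$ — is bounded by $(\rH^\star-\rL^\star)$ times that small probability, summed over at most $T$ rounds in which $i$ participates. I would formalize this by conditioning on the event that at least one other participant exceeds $\rho$ in round $t+1$ (whose complement has the small probability above), on which the reserve is $\rH$ from round $t+1$ onward no matter what $i$ bids in round $t$; on that event truthfulness is myopically optimal hence optimal, and the total expected loss over the bad events is at most $\alpha_i \cdot T \cdot \eps \cdot (\text{stuff}) $, matching the $T\alpha_i\eps$ slack in Definition~\ref{def:best-response}.

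The one delicate point in the IC argument is handling agent $i$'s \emph{own} contribution to triggering the switch: even if $i$ shades below $\rho$, the switch may still be triggered by someone else, and $i$'s deviation in one round interacts with the information other agents have in later rounds — but since other agents' strategies are fixed (truthful) and do not depend on whether the reserve already switched in a way that $i$ can exploit, one can bound the value of any deviation round-by-round against the "someone else triggers it" event. I expect the bookkeeping that turns the per-round bound $\eps/(2(\rH^\star-\rL^\star))$ into the global $T\alpha_i\eps$ guarantee — correctly accounting for the $\alpha_i$ factor (only rounds where $i$ participates matter) and for the fact that once the switch happens the question is moot — to be the most error-prone step, but it is routine given the concentration bound defining $n_0$.
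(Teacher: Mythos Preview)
Your decomposition and the $s=L$ cases (both revenue and incentive compatibility) match the paper exactly. The genuine gap is in your $s=H$ incentive argument, specifically the per-round claim that ``the probability that in a given round no other participating agent bids above $\rho$ is \ldots\ by the choice of $n_0$ at most $\eps/(2(r_H^\star-r_L^\star))$.'' This is false when $\alpha$ is small. In a single round that probability is $\prod_{j\neq i}\big(1-\alpha_j(1-F_H(\rho))\big)\approx \exp\{-(n-1)\alpha(1-F_H(\rho))\}$, and making this at most $\delta/2=\eps/(2(r_H^\star-r_L^\star))$ would require $(n-1)\alpha \gtrsim \log(2/\delta)/(1-F_H(\rho))$. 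But $n_0-1 = 1.59\log(2/\delta)/(1-F_H(\rho))$ carries no factor of $\alpha$, so for $n=n_0$ and, say, $\alpha=1/n$ the per-round probability is bounded away from zero, not by $\delta/2$. Note also that valuations are fixed across rounds, so these per-round events are strongly positively correlated and cannot be treated as independent Bernoulli trials.

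The paper handles this by batching over $\tau=\lceil(n_0-1)/((n-1)\alpha)\rceil$ rounds rather than arguing per round: over $\tau$ rounds each $j\neq i$ participates at least once with probability at least $(n_0-1)/(1.59(n-1))$ (this is where the constant $1.59\approx 1/(1-e^{-1})$ comes from), and combining this with the independent chance $1-F_H(\rho)$ that $v_j>\rho$ yields $\Pr(\text{no }j\neq i\text{ triggers the switch by round }\tau)\le\delta/2$ (Lemma~\ref{lemma:k_highbidders_by_tau}). The incentive bound then splits the cost of truthfulness into two pieces: rounds $1,\dots,\tau$, worth at most $\alpha_i\tau(r_H^\star-r_L^\star)\le \alpha_i\eps T/2$ since $T\ge T_0$ gives $\tau\le\delta T/2$; and rounds $\tau{+}1,\dots,T$, where only with probability $\le\delta/2$ does agent $i$'s early truthful bid affect the reserve at all, contributing at most $(\delta/2)\alpha_i T(r_H^\star-r_L^\star)=\alpha_i\eps T/2$. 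The same two-term split yields the $s=H$ revenue bound; your claim that the expected number of pre-switch rounds is ``a constant times $\tau$'' is off for the same reason---the correct bound is $\tau + (\delta/2)T$, and the second term cannot be absorbed into the first.
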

%
Thus, using mechanism $\M(\rho,\rL,\rH)$ in this setting, truthful bidding is an approximate equilibrium and the revenue is very close to the benchmark. 

Defining $\delta = \eps/(\rH-\rL)$, an appealing feature of Theorem~\ref{thm:approx_Nash} is that the lower bound on number of bidders, $n_0$, grows only as $O(\log(1/\delta))$. 
On the other hand, the lower bound on number of rounds $T_0$ grows as $O(\log(1/\delta)/\delta)$ for $n \alpha = \Omega(1)$. This is somewhat larger than $n_0$ for small $\eps$ but this is not a major concern since the number of identical (or very similar) impressions is often large in online advertising settings.
The below example demonstrates a numerical illustration of Theorem~\ref{thm:approx_Nash}.

\begin{figure} \label{fig:example_visualization}
\centering
  \includegraphics[scale=.3]{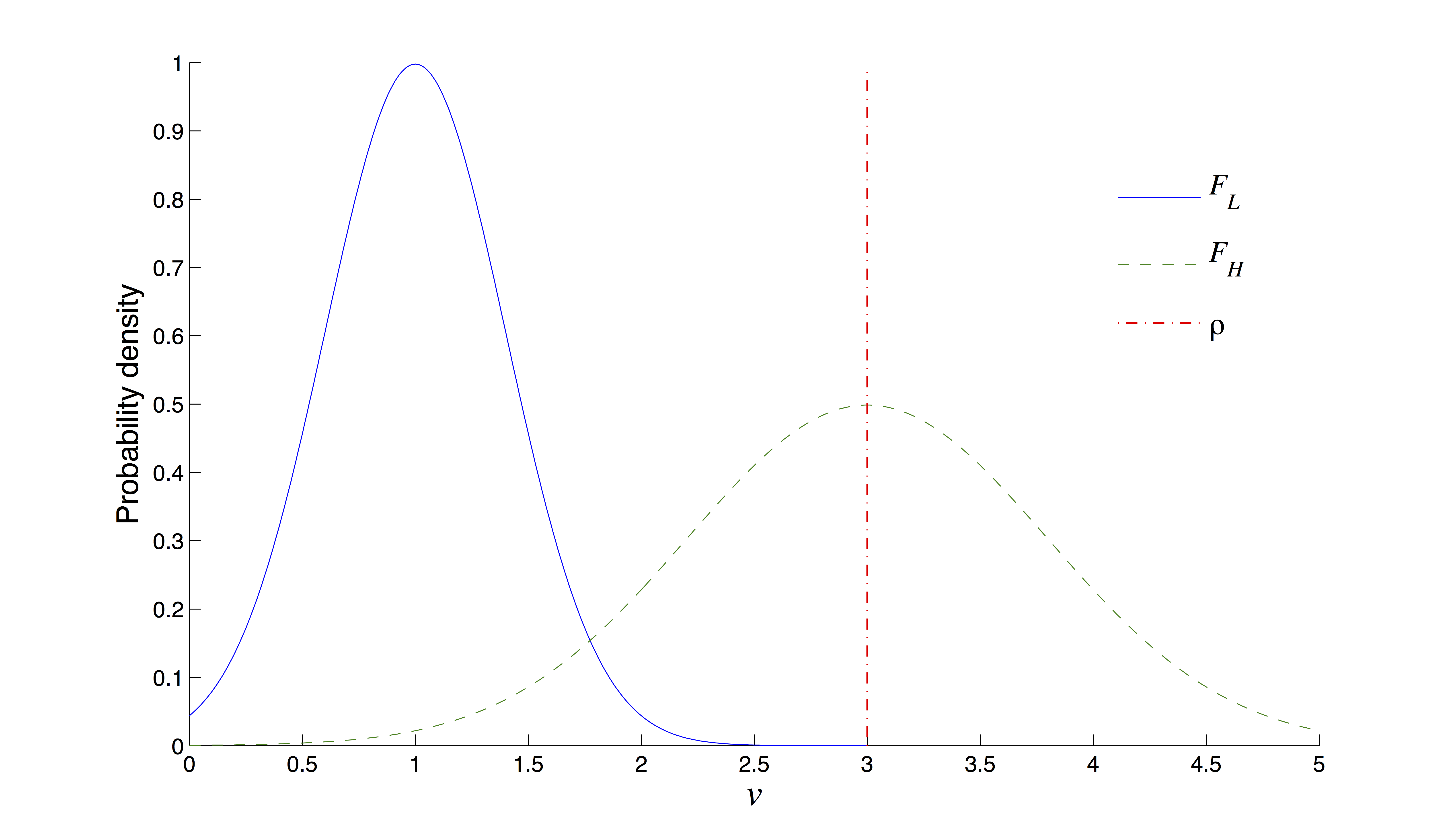} 
\caption{Illustration of distributions in Example \ref{ex:overlapping_gaussians}. $F_L$ is the normal distribution, with mean $1$ and standard deviation $0.4$, truncated to interval $[0,3]$,  and $F_H = \N(3,0.8^2)$, i.e., normal with mean $3$ and standard deviation $0.8$. We use $\rho=3$.}
\end{figure}

\vspace{-10pt}
\begin{example} \label{ex:overlapping_gaussians}
{Suppose $F_L$ is the normal distribution, with mean $1$ and standard deviation $0.4$, truncated to interval $[0,3]$, and $F_H = \N(3,0.8^2)$, i.e., normal with mean $3$ and standard deviation $0.8$. These distributions are shown in Figure \ref{fig:example_visualization}.
Also, let 
$n\alpha = 1$, $\alpha_i=\alpha$ for all $i$.
Note that each agent participates in $\alpha T = T/n$ rounds on average, making it non-trivial to have dynamic reserves without losing incentive compatibility if $T/n$ is much larger than $1$.
We have $\rL \approx 0.796$ and $\rH\approx 2.318$. 
Using $\rho=3$ gives $n_0 \approx 19.52$. Using $n=20$, we obtain $T_0 \approx 6800$ for $\eps=0.009$ so we consider $T=T_0=6800$ in our simulations. 
The (optimal) static second price auction obtains average-revenue per-round equal to $0.755$ using (constant) reserve price $1.05$. 
Mechanism $\M(\rho,\rL,\rH)$ yields per-round revenue of  $0.935$  (Theorem \ref{thm:approx_Nash} guarantees that the loss relative to the optimal revenue is $\eps=.009$ at most per round) improving more than $23\%$ over the static mechanism. The per-round revenue of the optimal mechanism that knows the type of the impressions is equal to $0.938$. The 95\% confidence error in estimating the revenues is less than $0.007$. The average welfare of a buyer per round of participation, averaged over $s$ and $v_i$ is found to be $0.335$ (with 95\% confidence error 0.002).
Note that given this implies that the threshold mechanism is $\eps = 0.009$-incentive compatible. 
%
}\end{example}


\vspace{-10pt}
\section{Dynamic Incentive Compatibility} \label{sec:DIC} 

In this section, we show that, with high probability, no agent has a large incentive to deviate from the truthful strategy in later rounds after acquiring new information. We also relax the requirement that $F_L$ needs to have bounded support.





\begin{theorem} \label{thm:approx_BNE}
Recall \eqref{eq:opt_reserve}. Let $\lambda = 1-F_L(\rho)$.
Consider any $\eps <r_H^\star - r_L^\star$ and let $\delta = \eps/(r_H^\star - r_L^\star)$.
Then, Mechanism $\M(\rho, \rL, \rH)$ is $(\delta, \eps)$-dynamic incentive compatible for all $n\in [n_1,n_2]$ and $T \geq T_1$ where
\begin{align*}
n_1&\equiv   1+\frac{3.18 \log(2/\delta)}{(1-F_H(\rho))} < \infty \, ,\\
n_2 &\equiv \delta/\lambda \, ,\\
T_1 &\equiv \frac{4}{\delta} \left \lceil \frac{n_1-1}{(n-1) \alpha} \right \rceil \, .
\end{align*}
Further, the expected revenue of the mechanism is additively within $\eps T$ of the benchmark revenue under truthful bidding.
\end{theorem}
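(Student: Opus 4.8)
The plan is to establish both assertions — $(\delta,\eps)$-dynamic incentive compatibility and the revenue bound — by conditioning on the item type $s$, much as in the proof of Theorem~\ref{thm:approx_Nash}, but now extracting a \emph{per-realization, per-round} statement instead of an in-expectation one. The structural fact I lean on is that, given the reserve currently in force, truthful bidding is weakly dominant (hence myopically optimal) in each round's second-price auction, and the reserve path of $\M(\rho,\rL,\rH)$ is a monotone two-step path that jumps, once and for all, from $\rL$ to $\rH$ the first time a bid exceeds $\rho$. So the only lever a deviating agent has is the \emph{timing} of this jump; since a higher reserve weakly lowers every bidder's per-round payoff, no agent ever gains by \emph{hastening} the jump, and an agent can \emph{delay} it only by keeping all of its own bids at or below $\rho$, and only for as long as no competing truthful bid exceeds $\rho$. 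Call an agent \emph{high} if its valuation exceeds $\rho$.

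Type $L$: if no agent is high then, under truthful play, no bid ever exceeds $\rho$, the reserve stays at the optimal value $\rL$ for all $T$ rounds, and by the structural fact no agent can profitably deviate in any round — the realization is $0$-good and the realized revenue is exactly $T\,\Rev(\M_1^L)$. The probability that some agent is high given $s=L$ is at most $n(1-F_L(\rho))=n\lambda$, which is at most $\delta$ precisely because $n\le n_2=\delta/\lambda$. Hence conditionally on $s=L$ the realization is $\eps$-good with probability at least $1-\delta$, and the expected revenue loss is at most $\delta\,T\,\Rev(\M_1^L)$.

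Type $H$: now the jump is beneficial (it installs the optimal reserve $\rH$), and the only agent with any leverage is the \emph{unique} high agent that is first to participate — call it $i_1$, first participating in round $m_1$ — when such a unique agent exists (with $K:=$ number of high agents: if $K=0$ no deviation helps, and if two high agents first participate in round $m_1$ then neither can delay the jump). When $i_1$ is well defined, the best it can do by any deviation starting in any round $t\le m_1$ is to hold its bids below $\rho$, delaying the jump to the first round after $m_1$ in which a \emph{different} high agent participates (or forever, if $K=1$); its gain is then at most $(\rH-\rL)\,N$, where $N$ counts the rounds in which $i_1$ actually participates over that delayed window. The binding instance of the $\eps$-good condition is $t=m_1$, namely $(\rH-\rL)N\le\eps\,\alpha_{i_1}(T-m_1)$, i.e. $N\le\delta\,\alpha_{i_1}(T-m_1)$. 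Thus, conditionally on $s=H$,
\[
\Pr[\text{realization not }\eps\text{-good}]\;\le\;\Pr[K\le 1]\;+\;\Pr\!\big[K\ge 2\ \text{and}\ N>\delta\,\alpha_{i_1}(T-m_1)\big].
\]
For the first term, $K\sim\mathrm{Binomial}(n,1-F_H(\rho))$ has mean at least $(n_1-1)(1-F_H(\rho))=3.18\log(2/\delta)$, so a Chernoff bound controls $\Pr[K\le 1]$ — this is where $n\ge n_1$ enters. For the second term, I would condition on the set of high agents, on $i_1$, and on $m_1$; then $m_1$ is geometrically small (so $T-m_1$ is close to $T\ge T_1$), and $N$ is stochastically at most $1$ plus a geometric random variable whose tail ratio is at most $(1-\alpha)^{K-1}$, which shrinks rapidly once $K$ is moderately large since any other high agent can trigger the jump. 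Summing this geometric tail bound against the (Chernoff-controlled) law of $K$, and choosing the constants $3.18$ in $n_1$ and $4$ in $T_1$ large enough, makes each conditional bad-probability at most $\delta$; hence $\Pr[\text{not }\eps\text{-good}]\le p_L\delta+p_H\delta=\delta$.

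For the revenue: under truthful play with $s=H$ the mechanism runs the optimal reserve $\rH$ in every round after $m_1$ and a nonnegative-revenue auction before, and $\Ex[m_1\mid s=H]=O\!\big((n_1-1)/((n-1)\alpha)\big)$ is small compared with $\delta T\ge\delta T_1$, so the expected revenue is within $\eps T$ of $T\,\Rev(\M_1^H)$; together with the type-$L$ estimate this yields the claimed bound. The main obstacle is exactly the type-$H$ tail estimate: the tolerance $\eps\,\alpha_{i_1}(T-m_1)$ degrades as $m_1\to T$ or as $\alpha_{i_1}$ (hence $\alpha=\min_i\alpha_i$) becomes small, so a crude deterministic bound such as $N\le m_2-m_1$ is useless; one must instead play the light geometric tail of $N$ — favorable exactly when there are many high agents — against the Chernoff control on $K$, and verify that the chosen lengths $n_1$ (enough bidders) and $T_1$ (enough rounds) leave slack covering $\delta$ uniformly over all admissible $\alpha$, all $n\in[n_1,n_2]$, and all $\delta\in(0,1)$.
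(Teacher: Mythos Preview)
Your $s=L$ analysis matches the paper. The revenue bound for $s=H$ is in the right spirit. But your $s=H$ incentive-compatibility argument has a genuine conceptual gap.

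The $\eps$-good condition (Definition~\ref{def:DIC}) is an \emph{interim} condition: at round $t$, agent $i$'s gain from deviating is measured as a \emph{conditional expectation} given $i$'s observed history $H_{it}$ on the truthful path, not as a realization-specific quantity. Two realizations that induce the same $H_{i_1,m_1}$ have the \emph{same} $\eps$-good status for $i_1$ at $m_1$, regardless of what $K$ or $N$ actually turn out to be. So your decomposition
\[
\Pr[\text{not }\eps\text{-good}] \;\le\; \Pr[K\le 1] \;+\; \Pr\!\big[K\ge 2,\ N>\delta\,\alpha_{i_1}(T-m_1)\big]
\]
is not valid: a realization with $K=1$ but $m_1=1$ is $\eps$-good (the agent's posterior at $m_1=1$ puts high mass on $K\ge 2$), while a realization with $K=5$, small $N$, but very large $m_1$ may fail to be $\eps$-good (because the history ``reserve still $\rL$ at round $m_1$'' is strong evidence against other high agents existing, inflating $\E[N\mid H_{i_1,m_1}]$).

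The missing idea is precisely how to control the agent's posterior after conditioning on $\{\text{reserve still }\rL\text{ at }m_1\}$. The paper does this via Lemma~\ref{lemma:BPE_truthfulness}: define $t_\delta$ as the first $t$ with $\Pr(Q_t^{\sim i})\ge 1-\delta$, so that for $t\le t_\delta$ one has $\Pr(\bar Q_{t-1}^{\sim i})>\delta$; then the ratio bound
\[
\Pr\big(\bar Q_\tau^{\sim i}\,\big|\,\bar Q_{t-1}^{\sim i}\big)\;\le\;\frac{\Pr(\bar Q_\tau^{\sim i})}{\Pr(\bar Q_{t-1}^{\sim i})}\;\le\;\frac{(\delta/2)^2}{\delta}\;=\;\delta/2
\]
controls exactly the conditional probability the agent cares about. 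Note this requires the \emph{unconditional} bound $\Pr(\bar Q_\tau^{\sim i})\le(\delta/2)^2$ rather than $\delta/2$ --- which is precisely why $n_1$ carries the constant $3.18=2\times 1.59$ (double the $1.59$ of Theorem~\ref{thm:approx_Nash}), a doubling your sketch does not motivate. Your geometric-tail-of-$N$ idea would need to be re-cast as a bound on $\E[N\mid H_{i_1,m_1}]$ uniformly over all $m_1\le t_\delta$, and that is essentially what the two-case expected-cost computation at the end of the proof of Lemma~\ref{lemma:BPE_truthfulness} does.
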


%
Note that this theorem requires an $n$ to be not too large. The assumed upper bound on $n$ can be eliminated in two different ways: In the (immediate) corollary below, we assume a bounded support for $F_L$ leading to $\lambda=0 \Rightarrow n_2 = \infty$. Later in Section \ref{sec:general}, we introduce a generalized threshold mechanism, which then facilitates a result similar to Theorem \ref{thm:approx_BNE} while allowing $n$ to be arbitrarily large (Theorem \ref{thm:approx_BNE_unbounded}).

\begin{corollary}[Bounded Support] \label{cor:approx_BNE}
Recall \eqref{eq:opt_reserve}. Let $F_L$ be supported on $[0,\bar{L}]$, $\bar{L}<\infty$. Consider any $\eps <r_H^\star - r_L^\star$ and let $\delta = \eps/(r_H^\star - r_L^\star)$.
Then, Mechanism $\M(\rho, \rL, \rH)$ is $(\delta, \eps)$-dynamic incentive compatible for all $n\geq n_1$ and $T \geq T_1$ where
\begin{align*}
n_1&\equiv   1+\frac{3.18 \log(2/\delta)}{(1-F_H(\rho))} < \infty \, ,\\
T_1 &\equiv \frac{4}{\delta} \left \lceil \frac{n_1-1}{(n-1) \alpha} \right \rceil \, .
\end{align*}
Further, the expected revenue of the mechanism is additively within $\eps T$ of the benchmark revenue under truthful bidding.
\end{corollary}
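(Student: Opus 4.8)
The plan is to obtain the corollary as the degenerate case $\lambda = 0$ of Theorem~\ref{thm:approx_BNE}. In the bounded-support regime we take the threshold $\rho$ to satisfy $\rho \ge \bar{L}$ (as in Theorem~\ref{thm:approx_Nash}), so that $\lambda = 1 - F_L(\rho) = 0$; then $n_2 = \delta/\lambda = +\infty$ and the admissible window $n \in [n_1, n_2]$ of Theorem~\ref{thm:approx_BNE} collapses to the half-line $n \ge n_1$, while $n_1$, $T_1$, and the revenue guarantee --- none of which involves $\lambda$ --- are inherited verbatim. So the entire content of the corollary is already contained in Theorem~\ref{thm:approx_BNE}; the one thing to verify is that the hypothesis $n \le n_2$ is used in that theorem's proof only in a way that becomes vacuous when $\lambda = 0$.

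Carrying this out, I would isolate where the upper bound on $n$ enters the proof of Theorem~\ref{thm:approx_BNE}. It appears exactly once, on the type-$L$ side: under truthful bidding with $s = L$, a realization can fail to be $\eps$-good only if some participating bidder draws a valuation exceeding $\rho$, since that bidder is strictly harmed by the resulting permanent jump of the reserve from $\rL$ to $\rH$ and would prefer to shade his bid just below $\rho$ to keep it at $\rL$. A union bound over the $n$ bidders caps the probability of this event (conditional on $s = L$) by $n\lambda$, and demanding $n\lambda \le \delta$ is precisely the constraint $n \le n_2$. With $F_L$ supported on $[0,\bar{L}]$ and $\rho \ge \bar{L}$ this event is impossible, so its probability is $0$ for every $n$ and the constraint disappears. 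The remaining ingredients are untouched: (i) on the type-$H$ side, a concentration bound using $1 - F_H(\rho) > 0$ and $n \ge n_1$ shows that with probability at least $1-\delta$ the reserve has already risen to $\rH$ after the first $\lceil (n_1-1)/((n-1)\alpha) \rceil$ rounds, whereafter no agent's bid can affect the current or any future reserve, so truthfulness is weakly dominant in every remaining round, and the cumulative gain from any deviation during the (few) earlier rounds is, as bounded in the proof of Theorem~\ref{thm:approx_Nash}, small enough once measured against $T \ge T_1$ to retain $\eps$-goodness; and (ii) the revenue bound, which follows exactly as in Theorem~\ref{thm:approx_Nash}, since under truthful bidding the reserve equals the single-round optimum $\rs$ in every round when $s = L$ and in all but the transient rounds when $s = H$.

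I expect no genuine obstacle. The only point needing care is the bookkeeping claim that the upper bound $n \le n_2$ is invoked in the proof of Theorem~\ref{thm:approx_BNE} solely to control the type-$L$ overshoot event; once this is checked by inspection, setting $\lambda = 0$ completes the argument. Alternatively one could prove the corollary from scratch, but it would be a strict simplification of the same argument, since with bounded support the whole $s = L$ branch is trivial: the reserve provably remains at $\rL = \rs$ for all $T$ rounds.
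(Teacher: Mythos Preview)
Your proposal is correct and matches the paper's own derivation: the paper explicitly presents Corollary~\ref{cor:approx_BNE} as the ``immediate'' specialization of Theorem~\ref{thm:approx_BNE} obtained by taking $\lambda = 1 - F_L(\rho) = 0$, which forces $n_2 = \delta/\lambda = \infty$ and removes the upper bound on $n$. One small correction: the type-$H$ dynamic incentive compatibility in Theorem~\ref{thm:approx_BNE} is established via Lemma~\ref{lemma:BPE_truthfulness} rather than the proof of Theorem~\ref{thm:approx_Nash}, but since you are inheriting the conclusion of Theorem~\ref{thm:approx_BNE} wholesale this does not affect your argument.
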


Comparing with Theorem \ref{thm:approx_Nash}, we see that the cost of the stronger notion of equilibrium here is only a factor $2$ loss in $n_1$ and a further factor $2$ loss in $T_1$.


We prove Theorem~\ref{thm:approx_BNE} in the appendix. 
We state below the main lemma leading to a proof of dynamic incentive compatibility $s=H$.

Let $Q_t$ be the event that the reserve in round $t+1$ is $r_H^\star$ assuming truthful bidding (thus $Q_t$, here, is the event that  a bidder with valuation exceeding $\rho$ participates in one of the first $t$ rounds). Let $Q_t^{\sim i}$ be the event that the reserve in round $t+1$ would have been $r_H^\star$ assuming truthful bidding even if the bids of agent $i$ are removed (thus $Q_t^{\sim i}$, here, is the event that a bidder $j\neq i$ with valuation exceeding $\rho$ participates in one of the first $t$ rounds).
Let
\begin{align}
  t_\delta =  \min  \{t: \exists i \textup{ s.t. } \Pr\big( Q_t^{\sim i} \,\big |s=H\big ) \geq 1- \delta \}
  \label{eq:tdelta_def}
\end{align}
(It turns out that $t_\delta \leq\lceil \frac{n_1-1}{(n-1)\alpha }\rceil$.)
By definition, $Q_{t} \supseteq Q_{t}^{\sim i}$ for all $i$ and all $t$. It follows that $\Pr(Q_{t_\delta}|s=H) \geq 1- \delta$, so, in establishing $(\delta, \eps)$-dynamic incentive compatibility, we can ignore the trajectories under which $Q_{t_\delta}$ does not occur (these trajectories have combined probability bounded above by $\delta$). Under $Q_{t_\delta}$, the reserve in round $t_\delta+1$  (and all later rounds) is already $r_H^\star$, making truthful bidding an exact best response in those rounds. Also, for agents whose valuation is less than $\rho$, truthful bidding is always a best response. For an agent $i$ with valuation exceeding $\rho$, on the equilibrium path, the only time that $i$ may potentially benefit from not being truthful is the first time that $i$ participates (and if the reserve is still $r_L^\star$); once $i$ has bid truthfully once, future bids of $i$ have no impact on the reserve and truthful bidding is a best response.\footnote{In the present setting with mechanism $\M(\rho, \rL, \rH)$, agent $i$ will see a reserve that has already risen to $r_H^\star$ in each subsequent round that $i$ participates in. (Later we will generalize the threshold mechanism in Section~\ref{sec:general}, but it will still be true that if $i$ bids above $\rho$ once, future bids of $i$ will not affect the reserve, hence truthful bidding will be exactly optimal in subsequent rounds.)}  Hence, it suffices to show that under $s=H$ for $t\leq  t_\delta$, if an agent $i$ participates for the first time in round $t$, truthful bidding is (additively) $\alpha(T-t_\delta)\eps$-optimal, assuming that others bid truthfully, even if the reserve is still $r_L^\star$ in round $t$.

\begin{lemma}  \label{lemma:BPE_truthfulness}
Assume that $s=H$, $n\geq  n_1\, $, $\eps < (r_H^\star - r_L^\star)$.
Let $t_\delta$ be defined as in Eq.~\eqref{eq:tdelta_def}.
For any agent with valuation exceeding $\rho$ who participates for the first time in a round $t \leq t_\delta$,
and sees a reserve $r_L^\star$, truthful bidding is (additively) $\alpha(T-t_\delta)\eps$-optimal, assuming that others bid truthfully.
Further, we have
$t_\delta \leq \tau = \left \lceil \frac{(n_1-1)}{(n-1)\alpha }\right \rceil \, $
and $\Pr(Q_{t_\delta}|s=H) \geq 1- \delta$. 
\end{lemma}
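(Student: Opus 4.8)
The plan is to separate the claim into two parts: (i) the combinatorial/probabilistic bound $t_\delta \le \tau = \lceil (n_1-1)/((n-1)\alpha)\rceil$ together with $\Pr(Q_{t_\delta}\mid s=H)\ge 1-\delta$, and (ii) the incentive bound that truthful bidding is $\alpha(T-t_\delta)\eps$-optimal for the ``pivotal'' agent. I would do (i) first since it is the easier half and is needed to even state (ii) meaningfully. Fix any agent $i$ and consider the other $n-1$ agents under $s=H$, bidding truthfully. In each round, each agent $j\neq i$ independently participates w.p.\ $\alpha_j\ge\alpha$, and conditioned on participating bids above $\rho$ w.p.\ $1-F_H(\rho)$. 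Hence in $t$ rounds the number of ``effective draws'' from agents $\neq i$ is stochastically at least $\mathrm{Binomial}((n-1)t,\alpha(1-F_H(\rho)))$ in a suitable coupling, so $\Pr(Q_t^{\sim i}\mid s=H)\ge 1-(1-\alpha(1-F_H(\rho)))^{(n-1)t}\ge 1-e^{-\alpha(1-F_H(\rho))(n-1)t}$. Setting $t=\tau$ and using $n\ge n_1 = 1+ 3.18\log(2/\delta)/(1-F_H(\rho))$ — note $3.18 > 2/\log(\cdot)$-type slack, indeed $1/1.59 \approx 0.629$ and the constant is chosen so that $\alpha(n-1)\tau(1-F_H(\rho)) \ge (n_1-1)(1-F_H(\rho)) \ge 3.18\log(2/\delta)\cdot\tfrac{1}{?}$ — gives the exponent at least $\log(1/\delta)$ (the factor-of-two in $\log(2/\delta)$ is the cushion that makes the inequality clean), so $\Pr(Q_\tau^{\sim i}\mid s=H)\ge 1-\delta$, whence $t_\delta\le\tau$. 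Then $\Pr(Q_{t_\delta}\mid s=H)\ge\Pr(Q_{t_\delta}^{\sim i}\mid s=H)\ge 1-\delta$ by the containment $Q_t\supseteq Q_t^{\sim i}$ noted before the lemma.

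For part (ii), I would bound the total gain the pivotal agent $i$ — valuation $v_i>\rho$, participating for the first time in round $t\le t_\delta$ with reserve still $r_L^\star$ — can extract by deviating, relative to truthful bidding. The deviation only matters through whether $i$'s round-$t$ bid crosses $\rho$: bidding below $\rho$ (``shading'') keeps the reserve at $r_L^\star$ until some \emph{other} agent pushes it up, which by definition of $t_\delta$ and $Q_{t_\delta}^{\sim i}$ happens by round $t_\delta$ with probability $\ge 1-\delta$; bidding at or above $\rho$ (truthful, since $v_i>\rho$) raises it to $r_H^\star$ immediately. So the only rounds where the two strategies can produce different per-round utility for $i$ are rounds $t,t+1,\dots,t_\delta$ (after $t_\delta$ the reserve is $r_H^\star$ in both scenarios on the good event, and on the bad event, probability $\le\delta$, we can crudely bound the extra gain). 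In any single round, the difference in $i$'s utility between facing reserve $r_L^\star$ versus $r_H^\star$ is at most $r_H^\star - r_L^\star$ (the reserve only affects $i$'s payment/allocation by at most the gap in reserves, by the standard second-price-with-reserve utility comparison — either $i$ wins and pays at most the higher reserve more, or $i$ is excluded and loses surplus at most $v_i - r$, but truthfulness in a one-shot second price auction with reserve is a weakly dominant strategy so \emph{within a round} deviating never helps). Summing: the per-round advantage of shading is at most $r_H^\star-r_L^\star$ over at most $t_\delta-t+1\le t_\delta$ rounds of ``cheap reserve'' that $i$ actually participates in; in expectation $i$ participates in $\alpha_i$ fraction of those, giving an expected gain of at most $\alpha_i\, t_\delta (r_H^\star-r_L^\star)$ from the favorable event plus $\alpha_i(T-t)\,(r_H^\star - r_L^\star)\,\delta$ from the $\delta$-probability bad event where shading might still be helping past $t_\delta$. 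Using $t_\delta\le\tau=\lceil(n_1-1)/((n-1)\alpha)\rceil$ and $\delta(T-t)\le \delta T$, and recalling $T\ge T_1 = (4/\delta)\lceil(n_1-1)/((n-1)\alpha)\rceil$ so that $\tau \le \delta T/4$, both terms are comfortably bounded: the total expected deviation gain is at most $\alpha_i(r_H^\star-r_L^\star)(\tau + \delta T) \le \alpha_i(r_H^\star-r_L^\star)\cdot 2\delta T = 2\alpha_i \eps T$, and a more careful accounting (only counting rounds after $t_\delta$ for the bad event, and noting the good-event rounds number at most $t_\delta \le \tau \le \delta(T-t_\delta)/\text{const}$) tightens this to $\alpha_i(T-t_\delta)\eps$ as claimed. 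Since $\alpha_i \ge \alpha$ is what enters the definition of $\eps$-optimality in Definition \ref{def:best-response} through $T\alpha_i\eps$, and here we restrict attention to the per-round-of-participation normalization, this is exactly the $\alpha(T-t_\delta)\eps$-optimality asserted.

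The main obstacle — and the step I would spend the most care on — is the incentive bound in part (ii): cleanly arguing that the only useful deviation is shading in the single first round, that its benefit is a reserve reduction of at most $r_H^\star - r_L^\star$ per round, and that the number of rounds over which this reduction persists is controlled by $t_\delta$ on the high-probability event (and crudely by $T$ on the $\delta$-probability complement). The subtlety is that $i$'s deviation can change the \emph{information} other agents have and hence (in principle) their later behavior, but since we only ask for a best response against \emph{truthful} opponents (Definition \ref{def:best-response}), and $\M(\rho,r_L^\star,r_H^\star)$'s reserve depends on past bids only through the threshold-crossing event, the opponents' (truthful) bids are unaffected by $i$'s bid, so the reserve path is a deterministic function of the realized valuations/participations plus $i$'s single threshold-crossing decision — which is what makes the bound go through. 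The constants $1.59/3.18$ and the $4$ in $T_1$ are exactly the slack needed to absorb the ceiling functions and the factor-two cushions above, so I would verify them at the end rather than track them symbolically throughout.
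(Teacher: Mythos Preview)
Your overall architecture is right, and part~(i) is essentially correct. The genuine gap is in part~(ii): you compute the ``bad event'' probability \emph{unconditionally}, but the agent is deciding \emph{after} having observed that the reserve is still $r_L^\star$ in round $t$, i.e., after observing $\bar{Q}_{t-1}^{\sim i}$. The relevant probability is therefore the conditional $\Pr(\bar{Q}_\tau^{\sim i}\mid \bar{Q}_{t-1}^{\sim i})$, not the prior $\Pr(\bar{Q}_\tau^{\sim i})$. Your bound $\Pr(\bar{Q}_\tau^{\sim i})\le \delta$ from part~(i) does not control this conditional, because the conditioning event $\bar{Q}_{t-1}^{\sim i}$ can itself have probability as small as roughly $\delta$ (indeed, by definition of $t_\delta$ it can be just above $\delta$), so the ratio could be close to~$1$.

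This is precisely where the constant $3.18=2\times 1.59$ earns its keep, and your closing remark that ``the constants $1.59/3.18$\ldots are exactly the slack needed to absorb the ceiling functions'' misidentifies its purpose. The paper uses Lemma~\ref{lemma:k_highbidders_by_tau} with $C=3.18$ to obtain the \emph{squared} bound $\Pr(\bar{Q}_\tau^{\sim i})\le(\delta/2)^2\le \delta^2/2$, and separately uses the definition of $t_\delta$ to get $\Pr(\bar{Q}_{t-1}^{\sim i})>\delta$ for all $t\le t_\delta$. Dividing gives
\[
\Pr\big(\bar{Q}_\tau^{\sim i}\,\big|\,\bar{Q}_{t-1}^{\sim i}\big)\;\le\;\frac{\delta^2/2}{\delta}\;=\;\frac{\delta}{2}\,,
\]
which is the conditional bound you actually need. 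With this in hand, the rest of your accounting (at most $\alpha_i\tau(r_H^\star-r_L^\star)$ on the good event, at most $\alpha_i(T-t)(r_H^\star-r_L^\star)$ on the bad event weighted by $\delta/2$, combined with $\tau\le \delta(T-\tau)/2$ from $T\ge T_1$) goes through and yields exactly $\eps\,\alpha_i(T-t)$, without any need for your ``more careful accounting'' hand-wave. So the fix is: in part~(i), push the exponent to get $(\delta/2)^2$ rather than $\delta$; in part~(ii), argue conditionally and use the ratio trick with the lower bound on $\Pr(\bar{Q}_{t-1}^{\sim i})$ that comes for free from the definition of $t_\delta$.
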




\section{The Generalized Threshold Mechanism} \label{sec:general}
We now present a generalization of the threshold mechanism that allows us to significantly weaken the required bound on the right tail of the low type distribution of Theorem~\ref{thm:approx_BNE}.


The \emph{generalized threshold mechanism} is defined by four parameters and is denoted by $\M(\rho,r_L,r_H,k)$ where $r_L$ is the initial reserve price. The reserve stays $r_L$ until $k$ distinct agents bid above $\rho$ (possibly in different rounds). If this occurs then for all subsequent rounds, the reserve price will increase to $r_H$. 

\begin{theorem} \label{thm:approx_BNE_unbounded}
Recall \eqref{eq:opt_reserve}. Let $\lambda = 1-F_L(\rho)$. Assume $\lambda \leq (1-F_H(\rho))/18$.
Fix positive $\eps < r_H^\star - r_L^\star$. Define $\delta=\eps/(r_H^\star-r_L^\star)$.
 Let
\begin{align*}
n_3&\equiv   1+8.48 \log(2/\delta)/(1-F_H(\rho)) < \infty \, ,\\
n_4 &\equiv 0.56 \log (2/\delta)/\lambda \, ,\\
\bar{n} &\equiv \max(n_3,n_4)\, ,\\
T_1 &\equiv \frac{4}{\delta} \left \lceil \frac{n_3-1}{(\min(n, \bar{n})-1) \alpha} \right \rceil   .
\end{align*}
We provide mechanisms that work well for any $n \geq n_3$ and $T \geq T_1$. 
\begin{itemize}
  \item Suppose $n_3< n_4$. For all $n \in [ n_3, n_4]$ and $T \geq T_1 = \frac{4}{\delta} \left \lceil \frac{n_3-1}{(n-1) \alpha} \right \rceil$, the generalized threshold mechanism
  $\M(\rho, \rL, \rH, 2.26 \log(2/\delta))$ is $(\delta, \eps)$-dynamic incentive compatible, and it is additively $\eps T$ close to the revenue benchmark.
  \item For all $n \geq \max(n_3,n_4)$ and $T \geq T_1 = \frac{4}{\delta} \left \lceil \frac{n_3-1}{(\bar{n}-1) \alpha} \right \rceil$, the generalized threshold mechanism $\M(\rho, \rL, \rH, 4 \lambda n)$ is $(\delta, \eps)$-dynamic incentive compatible, and it is additively $\eps T$ close to the revenue benchmark.
\end{itemize}

\end{theorem}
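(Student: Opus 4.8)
The plan is to follow the skeleton of the proof of Theorem~\ref{thm:approx_BNE} and Lemma~\ref{lemma:BPE_truthfulness}, with the event ``some bid exceeds $\rho$'' systematically replaced by the event ``$k$ distinct agents have bid above $\rho$''. As before, everything reduces, under the hypothesis that all agents bid truthfully, to two concentration facts plus a round-by-round incentive check: (a) when $s=L$, the reserve never rises except on an event of probability $O(\delta)$; (b) when $s=H$, the reserve rises to $r_H^\star$ within $t_\delta \le \tau$ rounds with probability at least $1-\delta$. Write $Z_s$ for the number of distinct agents whose truthful bid exceeds $\rho$, so that conditioned on $s$, $Z_s \sim \mathrm{Binomial}(n, 1-F_s(\rho))$, with $\E[Z_L \mid s=L] = n\lambda$ and $\E[Z_H \mid s=H] = n(1-F_H(\rho))$.

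First I would handle $s=L$, which is the genuinely new ingredient relative to Corollary~\ref{cor:approx_BNE} (there $\lambda=0$, so nothing happens). The key observation is that in both sub-regimes the chosen $k$ satisfies $k \ge 4\,\E[Z_L\mid s=L] = 4n\lambda$ --- when $n \le n_4$ this is because $4n\lambda \le 4\lambda n_4 = 2.24\log(2/\delta) \le k$, and when $n \ge n_4$ we take $k = 4\lambda n$ exactly --- while simultaneously $k = \Omega(\log(1/\delta))$ (directly for the constant choice, and via $n \ge n_4$ for the scaling choice). A multiplicative Chernoff / Poisson-tail bound then gives $\Pr[Z_L \ge k \mid s=L] \le \delta/2$, indeed exponentially small; on the complement the mechanism is just a second price auction run at the revenue-optimal reserve $r_L^\star$ for $F_L$ in every round, and the loss from the rare complementary event is absorbed into the $\eps T$ slack. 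For incentive compatibility when $s=L$: an agent with valuation at most $\rho$ never affects the reserve, so truthfulness is a best response; an agent $i$ with $v_i > \rho$ can only profit by shading below $\rho$ if its truthful bid would be the pivotal $k$-th distinct bid above $\rho$, and the conditional probability of this given $i$'s information (which includes that the reserve is still $r_L^\star$, a probability-$\ge 1-\delta/2$ event that inflates probabilities by at most a factor $2$) is at most $2\Pr[\mathrm{Binomial}(n-1,\lambda) \ge k-1 \mid s=L] \le \delta$. Since each pivotal round costs at most $r_H^\star - r_L^\star$ per future round of participation, the expected gain from deviating at round $t$ is at most $\delta(r_H^\star - r_L^\star)\,\alpha_i(T-t) = \eps\,\alpha_i(T-t)$, as required.

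Next I would treat $s=H$, mirroring Lemma~\ref{lemma:BPE_truthfulness} but now demanding that $k$ distinct agents other than $i$ exceed $\rho$. The hypothesis $\lambda \le (1-F_H(\rho))/18$ together with $n \ge n_3$ forces $k$ to be at most a small constant fraction (below one half) of $\E[Z_H^{\sim i} \mid s=H] = (n-1)(1-F_H(\rho))$, where $Z_H^{\sim i}$ counts distinct exceeders other than $i$; a Chernoff bound then shows that with probability at least $1-\delta/2$ there are at least $k$ agents $j \neq i$ with $v_j > \rho$. Each such agent participates in each round independently with probability at least $\alpha$, so by the same counting argument as in Lemma~\ref{lemma:BPE_truthfulness}, within $\tau = \lceil (n_3-1)/((\min(n,\bar n)-1)\alpha)\rceil$ rounds at least $k$ of them have participated, whence the reserve has risen to $r_H^\star$; this defines $t_\delta \le \tau$ with $\Pr[Q_{t_\delta}\mid s=H] \ge 1-\delta$. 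After round $t_\delta$ truthful bidding is exactly optimal for everyone (the reserve is locked at $r_H^\star$, and once an agent has bid above $\rho$ its later bids are irrelevant); before round $t_\delta$ the only agents with an incentive to deviate are those with $v_i > \rho$ who have not yet participated, and the delay argument of Lemma~\ref{lemma:BPE_truthfulness} --- shading only postpones an increase that others trigger by round $t_\delta$ anyway --- bounds their gain at round $t$ by $\eps\,\alpha_i(T - t_\delta) \le \eps\,\alpha_i(T-t)$. Combining the two cases, the realization is $\eps$-good with probability at least $1-\delta$, giving $(\delta,\eps)$-dynamic incentive compatibility. For the revenue: when $s=H$ the only shortfall relative to $T\,\Rev(\M_1^H)$ comes from the at most $\tau$ ``learning'' rounds still at the suboptimal reserve $r_L^\star$, and $T \ge T_1$ makes $\tau \le (\delta/4)T$, so this shortfall is at most $\eps T$; together with the $s=L$ estimate above this yields the claimed revenue guarantee.

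The main obstacle, and the crux of the theorem, is choosing $k$ to meet two competing demands at once: $k$ must be large --- a multiple of $n\lambda$ and at least logarithmic in $1/\delta$ --- so that $\mathrm{Binomial}(n,\lambda)$ essentially never reaches it when $s=L$; yet $k$ must be small --- a fixed fraction of $n(1-F_H(\rho))$ --- so that not only does $\mathrm{Binomial}(n-1,1-F_H(\rho))$ comfortably exceed it when $s=H$, but $k$ distinct high-type bidders actually participate within $O(1/(n\alpha))$ rounds, which is what keeps $t_\delta$ small and hence keeps the revenue loss during learning below $\eps T$. The separation assumption $\lambda \le (1-F_H(\rho))/18$ is precisely what guarantees such a $k$ exists, and the split into two sub-regimes (constant $k = 2.26\log(2/\delta)$ for $n \le n_4$, scaling $k = 4\lambda n$ for $n \ge n_4$) is unavoidable because a single constant choice becomes too small relative to $n\lambda$ once $n$ grows. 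The remaining work --- the careful conditioning in the pivotality and delay estimates so that the $\delta$-budget is spent only once, and tracking the constants through the Chernoff bounds --- is routine given Lemma~\ref{lemma:BPE_truthfulness}.
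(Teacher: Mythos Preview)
Your plan matches the paper's: extend the proof of Theorem~\ref{thm:approx_BNE} and Lemma~\ref{lemma:BPE_truthfulness} to the generalized mechanism, using Chernoff bounds to show that under $s=L$ the reserve rises with probability at most $\delta/2$ and under $s=H$ it has risen within $\tau$ rounds with probability at least $1-(\delta/2)^2$. Your treatment of $s=L$ is, if anything, more careful than the paper's (the paper simply asserts that realizations with $Z_L<k$ are $0$-good and omits your pivotality estimate for agents with $v_i>\rho$).

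One point in your $s=H$ sketch needs tightening. The two-stage argument---first show that with probability $\geq 1-\delta/2$ at least $k$ agents $j\neq i$ have $v_j>\rho$, then show that at least $k$ of \emph{them} participate within $\tau$ rounds---does not work as written. With $\tau=\lceil(n_3-1)/((\min(n,\bar n)-1)\alpha)\rceil$ the per-agent participation probability $1-(1-\alpha)^\tau$ is only of order $(n_3-1)/(\min(n,\bar n)-1)$, which can be arbitrarily small; if only about $k$ agents have $v_j>\rho$, there is no reason all $k$ should participate by round $\tau$. The paper avoids this by bounding the \emph{joint} event directly: for each $j\neq i$ the indicator of $\{v_j>\rho\textup{ and }j\textup{ participates in some round }\leq\tau\}$ is Bernoulli with parameter at least $[1-(1-\alpha)^\tau](1-F_H(\rho))$, independently across $j$, and a single Chernoff bound on the sum over $j\neq i$ (whose mean is $\geq 2k$ by your own observation that $k\leq\E[Z_H^{\sim i}]/2$, after thinning by the participation factor) gives the needed $\Pr(Q_\tau^{\sim i})\geq 1-(\delta/2)^2$. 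This is exactly Lemma~\ref{lemma:k_highbidders_by_tau} with $C=8.48$ for the first bullet, and the paper introduces an analogous lemma with $k$ scaling linearly in $n$ for the second. With this correction your argument goes through and coincides with the paper's.
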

As a remark to ease the burden of notation: note that $T_1 \leq \frac{4}{\delta} \left \lceil \frac{1}{ \alpha} \right \rceil$ for the $n$ values of interest, i.e., for $n \geq n_3$. In other words, $\frac{4}{\delta} \left \lceil \frac{1}{ \alpha} \right \rceil$ rounds suffice to obtain our positive results. Also, note that for $n\alpha = \Theta(1)$, we still have $n_3 = O(\log(1/\delta))$ and $T_1 = O(\log(1/\delta)/\delta)$ as was the case for Theorem \ref{thm:approx_Nash}, so our requirements on the number of bidders and number of rounds needed continue to be reasonable.

\section{Discussion} 

\subsubsection*{Transient Valuations} \label{sec:transient}
So far we assumed that the valuations of the agents are constant over time. In this section, we consider the following extension of our model: each time an agent participates, he draws a new valuation from $F_s$ independently with some probability $\beta$ (our original model  corresponds to the case $\beta =0$), and retains his previous valuation with probability $1-\beta$.

We observe that our incentive compatibility result of Theorem~\ref{thm:approx_Nash} (also Corollary \ref{cor:approx_BNE}) holds in this setting because the incentive of the agents to deviate is even smaller and the proof 
work nearly as before for any $\beta \in [0,1]$. 
In addition, Theorem~\ref{thm:no_improve_n_equal_1} holds if we consider mechanisms that are periodic ex-post individually rational~\citep{BergemannV10}; in other words, the utility of any truthful agent at the end of each round $t$, $1\le t\le T$ should be non-negative.

The following example shows that a mechanism that is not ex-post individually rational can obtain a higher revenue by charging the agents a high price in advance: Suppose there is only one agent ($n=1$), the agent participates in all rounds ($\alpha=1$),  and the agent draws a new valuation at each round from the uniform distribution over $[0,1]$ ($\beta=1$). It is not difficult to see that the optimal constant reserve for this setting is equal to ${1\over 2}$ which yields the expected revenue of ${T\over 4}$ since the agent will purchase the item with probability ${1\over 2}$. Now consider a mechanism that offers reserve price ${T-1\over 2}-\varepsilon$ (for an arbitrarily small $\varepsilon$) in the first round and if the agent accepts that price, the mechanism offers the item for free in the future rounds, and if the agent refuses the offer, the mechanism posts a price of $1$ at each round. Observe that the agent will accept the mechanism's offer in the first round and the revenue obtained in this case is equal to ${T-1\over 2}-\varepsilon$. However, this mechanism is not ex-post individually rational. For $\beta\in(0,1)$ the optimal mechanism (that does not satisfy the ex-post IR property) would take the form of contracts followed by sequence of auctions~\citep{KakadeLN13,Battaglini05}.

\subsubsection*{Connection to Mean Field Equilibrium}
We now comment briefly on the connection between our work, and the concept of mean field equilibrium. A number of recent papers study notions of mean field equilibrium, e.g., \cite{IyerJS11,BalseiroBW13} study mean field equilibria in dynamic auctions, and \cite{GummadiKP13} studies mean field equilibrium in multiarmed bandit games. An agent making a mean field assumption assumes that the set of competitors (or cooperators) she faces will be drawn uniformly at random from a large pool of agents with a known distribution of types. In our work, agents' participate in a particular round of a dynamic auction independently at random, but our results do not require $n \rightarrow \infty$ and agents retain their valuation for all rounds in which they participate. 
In Theorem \ref{thm:approx_BNE}, one can have any fixed number of agents exceeding $n_1 = O(\log (1/\eps))$, and participants reason about the posterior distribution of competitors they will face in a round, given the information available to them. This posterior distribution of the valuations of competitors is in general different from the prior distribution of valuations and evolves from one round to the next.

\section{Conclusion} \label{sec:conclusion}

We considered repeated auctions of items, all of the same type, with the auctioneer not knowing the type of the items a-priori. In our model, the issue of incentives is challenging because a bidder typically participates in multiple auctions, and is hence sensitive to changes in future reserve prices based on current bidding behavior. We demonstrated a fairly broad setting in which a simple dynamic reserve second price auction mechanism can lead to substantial improvements in revenue over the best fixed reserve second price auction. In fact, our threshold mechanism is approximately truthful and achieves near optimal revenue in our setting. We demonstrate a numerical illustration of our results with a reasonable choice of model parameters, and show significant improvement in revenue over the static baseline.

For our future work, we would like to investigate the effects of various properties of the (joint) distributions of the valuation of the advertisers (e.g.,  more than two types), the characteristics of learning algorithms (as opposed to simple threshold mechanisms), and the effect of the rate (and manner) in which the valuations of advertisers change over time on the equilibrium and the revenue of the auctioneer.



\paragraph{\bf Acknowledgment}
We would like to thank Brendan Lucier, Mohammad Mahdian, and Mukund Sundararajan for their insightful comments and suggestions. This work was supported in part by Microsoft Research New England. The work of the second author was supported in part by a Google Faculty Research Award.


\bibliographystyle{plainnat}
\bibliography{mechanismdesign}

\appendix

\section{Appendix}

\subsection{Proof of Theorem~\ref{thm:approx_Nash}}
We now prove  Theorem \ref{thm:approx_Nash} by first showing that threshold mechanism $\epsilon$-incentive-compatible. 
First assume $s=L$. Consider any agent $i$ with valuation $v_i \in [0,\bar{L}]$ and assume that other agents are truthful always.
Since $v_i \leq \rho$, it is clear that truthful bidding weakly dominates any other strategy, since this is true myopically, the reserve is unaffected, and the bidding behavior of others is unaffected by the bids of agent $i$.
In this case, the reserve remains $r_L^\star$ and the agents bid truthfully throughout, so there is no loss in revenue.

Now assume $s=H$. In Appendix \ref{app:proofs}, we prove the following lemma.


\begin{lemma}  \label{lemma:k_highbidders_by_tau}
 Assume $s=H$, $n\geq n_0 \equiv  1+C \log (2/\delta)/(1-F_H(\rho))$. Fix an agent $i$.
With probability at least $1-(\delta/2)^{C/1.59}$, irrespective of what agent $i$ does, at least $k=1$ bidder
$j \neq i$ with valuation exceeding $\rho$ will bid in the first $\tau = \left \lceil \frac{n_0-1}{(n-1) \alpha} \right \rceil$ rounds.
For $k \leq C \log(2/\delta)/3.18$, at least $k$ bidders different from $i$ with valuation exceeding $\rho$ will bid in the first $\tau = \left \lceil \frac{n_0-1}{(n-1) \alpha} \right \rceil$ rounds with probability at least $1-(\delta/2)^{C/4.24}$.
\end{lemma}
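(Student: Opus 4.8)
The plan is to observe that the event in the claim depends only on the valuations and participation of the agents $j\neq i$ — so agent $i$ cannot influence it whatever he does — and then to estimate the lower tail of a sum of independent indicators.

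\emph{Step 1 (reformulation).} Fix $i$ and assume every $j\neq i$ bids truthfully. A truthful agent $j\neq i$ with $v_j>\rho$ submits a bid $v_j>\rho$ in every round he participates in (in a second price auction with reserve the dominant strategy is to bid one's value, whatever the reserve). Hence, letting $M$ be the number of agents $j\neq i$ with $v_j>\rho$ who participate in at least one of the rounds $1,\dots,\tau$, the event ``at least $k$ bidders $j\neq i$ with valuation exceeding $\rho$ bid in the first $\tau$ rounds'' is exactly $\{M\ge k\}$. Since $M$ is a function of $\big((v_j)_{j\neq i},\,(X_{jt})_{j\neq i,\,t\le\tau}\big)$ alone, it is independent of agent $i$'s strategy. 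It therefore suffices to lower bound $\Pr(M\ge 1\mid s=H)$ and $\Pr(M\ge k\mid s=H)$.

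\emph{Step 2 (dominating binomial and its mean).} Write $M=\sum_{j\neq i}Y_j$, where $Y_j=\indicator[v_j>\rho]\cdot\indicator[j\text{ participates in some round}\le\tau]$. Given $s=H$ the $Y_j$ are independent, with $\Pr(Y_j=1)=(1-F_H(\rho))\big(1-(1-\alpha_j)^\tau\big)\ge (1-F_H(\rho))\big(1-(1-\alpha)^\tau\big)=:p$, so $M$ stochastically dominates $B\sim\textup{Bin}(n-1,p)$. From $\tau\ge\frac{n_0-1}{(n-1)\alpha}$ and $1-\alpha\le e^{-\alpha}$ we get $(1-\alpha)^\tau\le e^{-(n_0-1)/(n-1)}$. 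Putting $a:=\frac{n_0-1}{n-1}$, which lies in $(0,1]$ because $n\ge n_0$, and using $(n_0-1)(1-F_H(\rho))=C\log(2/\delta)$,
\[
\E[B]=(n-1)p\ \ge\ C\log(2/\delta)\,\frac{1-e^{-a}}{a}\ \ge\ \big(1-\tfrac1e\big)\,C\log(2/\delta)\ =:\ \mu_0,
\]
the last step since $a\mapsto(1-e^{-a})/a$ is decreasing on $(0,1]$.

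\emph{Step 3 (Chernoff), and the main obstacle.} For $k=1$ this is elementary: $\Pr(M=0)\le\Pr(B=0)=(1-p)^{n-1}\le e^{-(n-1)p}\le e^{-\mu_0}=(\delta/2)^{(1-1/e)C}\le(\delta/2)^{C/1.59}$, using $1-1/e>1/1.59$ and $0<\delta/2<1$. For general $k$, the hypothesis $k\le C\log(2/\delta)/3.18$ with Step 2 gives $k\le\tfrac12\mu_0\le\tfrac12\E[B]$, so we are in the lower tail of $B$; applying a multiplicative Chernoff bound, e.g.\ $\Pr(B\le k)\le\exp\!\big(-\E[B]+k+k\log(\E[B]/k)\big)$ — which decreases in $\E[B]$ for $\E[B]\ge k$ and increases in $k$ — and substituting the worst case $\E[B]=\mu_0$, $k=C\log(2/\delta)/3.18$ yields, after simplification, $\Pr(M<k)\le(\delta/2)^{C/4.24}$. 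The conceptual content is all in Steps 1 and 2 (the strategy-independence of $M$, and turning the ceiling-defined $\tau$ into a usable mean bound via monotonicity of $(1-e^{-a})/a$); the only genuinely delicate point is the constant bookkeeping at the end — the bound $\mu_0$ is tightest exactly when $n=n_0$ and $\alpha\downarrow 0$, so both the ceiling in $\tau$ and the factor $1-1/e$ bite, and one must check this still leaves enough slack in the Chernoff exponent to reach $C\log(2/\delta)/4.24$. Everything else is routine.
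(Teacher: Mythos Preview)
Your argument is essentially the paper's own proof: both observe that the event is measurable with respect to $((v_j)_{j\neq i}, (X_{jt})_{j\neq i,\, t\leq\tau})$ and hence unaffected by agent $i$'s strategy, dominate below by $\textup{Bin}(n-1,p)$ with $p=(1-F_H(\rho))(1-(1-\alpha)^\tau)$, bound the mean via $1-(1-\alpha)^\tau \geq 1-e^{-a} \geq (1-1/e)a$ for $a=(n_0-1)/(n-1)\in(0,1]$, handle $k=1$ by $(1-p)^{n-1}\leq e^{-(n-1)p}$, and invoke a multiplicative Chernoff bound for general $k$. The only difference is cosmetic: you use the relative-entropy form $e^{-\mu}(e\mu/k)^k$ while the paper uses the cruder $e^{-\mu(1/2)^2/2}$.

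You are right to flag the final constant as the delicate point. If one actually carries out the ``bookkeeping,'' neither your Chernoff form nor the paper's delivers the exponent $C/4.24$: with the worst case $\mu_0=(1-1/e)C\log(2/\delta)$ and $k=\mu_0/2$, your bound gives exponent $\mu_0(1-\ln 2)/2\approx C\log(2/\delta)/10.3$, while the paper's $\mu_0/8$ gives $C\log(2/\delta)/12.7$. The paper's own displayed line $\exp\{-\mu(1-1/2)^2/2\}=\exp\{-C\log(2/\delta)/4.24\}$ is an arithmetic slip (it would require $\mu\geq 1.89\,C\log(2/\delta)$, three times the lower bound actually established). So the discrepancy lies in the lemma's stated constant, not in your method; the downstream applications can be repaired simply by enlarging $C$ (equivalently $n_3$) by a fixed factor.
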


Here we have used $1/(1-e^{-1}) < 1.59$.
We now show how the lemma, for $k=1$, implies the results.

\medskip
\begin{proof}{Proof of Theorem \ref{thm:approx_Nash}.}
  In the first $\tau \leq \frac{\eps T}{2(r_H^\star-r_L^\star)}$ rounds, agent $i$ being truthful can cause the reserve to rise though it wouldn't otherwise have risen, leading to a loss of at most $(r_H^\star-r_L^\star)\alpha_i \tau \leq  \frac{\alpha_i \eps T}{2}$ in expected utility for the agent. If the reserve would not have risen in the first $\tau$ rounds but agent $i$ caused it to rise, this can lead to a further loss of up to $(r_H^\star - r_L^\star)$ per round of participation, and such a loss occurs with probability at most $\eps/(2(r_H^\star-r_L^\star))$ from Lemma \ref{lemma:k_highbidders_by_tau}, leading to a bound of $\frac{\alpha_i \eps T}{2}$ for this loss to the agent. Combining yields the overall bound of $\alpha_i \eps T$ on the loss incurred by agent $i$ by being truthful relative to any other strategy for $s=H$.

Finally we bound the loss in revenue from using this mechanism if $s=H$.
Recall that the optimal auction if the auctioneer knows $s=H$ beforehand is to commit and run a second price auction with reserve $r_H^\star$ in all rounds.
Hence, similar to the above, the expected revenue loss to the auctioneer is bounded by $\tau (r_H^\star -r_L^\star) + \frac{\eps}{2(r_H^\star-r_L^\star)} \cdot T(r_H^\star-r_L^\star) \leq \eps T $ if $s=H$.
Since for each possible $s$, the expected loss in revenue is bounded above by $\eps T$, the same bound holds when we take expectation over $s$.
\end{proof}

\section{Proof of Theorem~\ref{thm:approx_BNE}}
%
First assume $s=L$. A simple union bound ensures that that all bidders have a valuation of at most $\rho$ with probability at least $1-\lambda n \geq 1- \delta$, in which case truthful bidding weakly dominates any other strategy. (Since $v_i \leq \rho$, it is clear that truthful bidding weakly dominates any other strategy, since this is true myopically and the bidding behavior of others is unaffected by the bids of agent $i$.)
Hence, the realization is $0$-good with respect to the mechanism with probability at least $1-\delta$. 
Further, we can easily bound the loss in expected revenue relative to the benchmark under truthful bidding: There is no loss with probability $1-\delta$ (the mechanism matches the benchmark mechanism, since the reserve remains $\rL$ throughout) and a loss of at most $(\rH-\rL)T$ (due to the reserve rising to $\rH$) with probability $\delta$. Thus, the loss in expected revenue is bounded by $\delta (\rH-\rL)T = \eps T$ as required.


Now assume $s=H$. For any agent with a valuation less than or equal to $\rho$, truthful bidding again weakly dominates any other strategy, since this is true myopically and the bidding behavior of others is unaffected by the bids of agent $i$. It remains to deal with agents whose valuation exceeds $\rho$,   to establish that truthful bidding is $(\delta,\eps)$-incentive-compatible. In particular, we need to show that with probability at least $1-\delta$, the realization is $\eps$-good with respect to the mechanism, i.e.,  that each such agent $i$ loses no more than $\alpha_i(T-t)\eps$ in expectation on the equilibrium path from bidding truthfully in round $t$, for each $t$ that $i$ participates in. But this follows from Lemma \ref{lemma:BPE_truthfulness}:  all realizations such that $Q_{t_\delta}$ occurs are $\eps$-good, and $\Pr(Q(t_\delta)) \geq 1-\delta$. See the argument after the statement of Theorem \ref{thm:approx_BNE} in Section \ref{thm:approx_BNE} for further details.

It remains to show that the loss in revenue is no more than $\eps T$, assuming truthful bidding under $\eps$-good realizations. Now, using Lemma \ref{lemma:k_highbidders_by_tau} and $Q_t \supseteq Q_t^{\sim i}$, we have
$$
\Pr(Q_\tau) \geq 1- (\delta/2)^2 \geq 1- \delta/2\, .
$$
Under $Q_\tau$, the mechanism matches the benchmark mechanism for rounds after $\tau$ and hence there is no loss in revenue relative to the benchmark, after the first $\tau$ rounds. In any round, the loss due to setting the wrong reserve (under truthful bidding) is bounded by $\rH-\rL$. Under $\bar{Q}_\tau$, the loss can be this large in each of $T$ rounds, in worst case. It follows that the overall loss in revenue is bounded by $(\rH-\rL) (\tau + \Pr(\bar{Q}_\tau)T)$. But by definition, $\tau = \delta T_1/4 \leq T\delta/4$ and $\Pr(\bar{Q}_\tau) \leq \delta/2$, implying that the loss in revenue relative to the benchmark is at most $(\rH-\rL)\delta(3/4)T = (3/4)\eps T\leq \eps T $ as required, using the definition of $\delta$.
\section{Proof of Theorem \ref{thm:approx_BNE_unbounded}}
We start with the first bullet. The proof for $s=H$ follows exactly the same steps as the proof of Theorem \ref{thm:approx_BNE}, except that we make use of the second part of Lemma \ref{lemma:k_highbidders_by_tau} (using $n \geq n_3$) since we are using $k = 2.26 \log(2/\delta) \leq 8.48\log(2/\delta)/3.18$ instead of $k=1$. Consider $s=L$. The probability of $k$ or more bidders with valuation exceeding $\rho$ is $\Pr(\textup{Binomial}(n,\lambda) \geq k)$. Since $n \leq n_4$, we have the mean of the binomial $\mu = n \lambda \leq \mu_0= 0.56 \log(2/\delta)$, in particular, $k \geq 4 \mu_0 \geq 4 \mu$. Now, using a Chernoff bound (on $\textup{Binomial}(n,\lambda_0)$ where $\lambda_0 = \mu_0/n \geq \lambda$ leading to a mean of $\mu_0$; clearly this binomial stochastically dominates the one we care about), we infer that
\begin{align*}
  \Pr(\textup{Binomial}(n,\lambda) \geq k) \, &\leq \Pr(\textup{Binomial}(n,\lambda_0) \geq k)\\
  &\leq \exp \{ - \mu_0 \cdot 3^2/(2+3)\} = \exp \{ -0.56 \log(2/\delta)\cdot 9/5\} \\
  &\leq  \exp \{-1.00\log(2/\delta) \} = \delta/2 \, .
\end{align*}
If all valuations are no more than $\rho$ then such a realization is clearly $0$-good (i.e., incentive compatible in an exact sense) with respective to the mechanism. Hence, we have shown that the probability of the realization being $\eps$-good is at least $1-\delta/2$, implying $(\delta, \eps)$-dynamic incentive compatibility for $s=L$. Further, the loss in expected revenue for $s=L$ is bounded above by $(\delta/2) T (\rH-\rL) = \eps T/2 \leq \eps T$ as required.

Now consider the second bullet. Consider $s=H$. The threshold is $k = 4 \lambda n$. Let $\bar{n} = \max\{n_3,n_4\}$.

\begin{lemma}  \label{lemma:klin_highbidders_by_tau}
  Assume $s=H$, $n\geq \bar{n} \geq n_0 \equiv  1+C \log (2/\delta)/(1-F_H(\rho))$ and $k \leq C \log(2/\delta)n/(3.18\bar{n})$. Fix an agent $i$.
With probability at least $1-(\delta/2)^{C/1.59}$, irrespective of what agent $i$ does, at least $k$ bidders
different from $i$ with valuation exceeding $\rho$ will bid in the first $\tau = \left \lceil \frac{n_0-1}{(\bar{n}-1) \alpha} \right \rceil$ rounds.
 with probability at least $1-(\delta/2)^{C/4.24}$.
\end{lemma}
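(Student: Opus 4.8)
\emph{Proof plan.} The idea is to mirror the argument for Lemma~\ref{lemma:k_highbidders_by_tau}, the only new feature being that $n$ may now greatly exceed the ``target'' size $\bar n$, while the required count $k$ is allowed to grow proportionally to $n/\bar n$. Fix the agent $i$. For each $j \neq i$, let $Y_j = \indicator\{v_j > \rho \text{ and } X_{jt}=1 \text{ for some } t \le \tau\}$ indicate that bidder $j$ has valuation exceeding $\rho$ and participates in at least one of the first $\tau$ rounds. Under the hypothesis that every agent other than $i$ bids truthfully, whenever such a $j$ participates it submits the bid $v_j > \rho$; moreover, in the generalized threshold mechanism the reserve in any round depends only on the \emph{set} of distinct agents who have ever bid above $\rho$, so nothing agent $i$ does can prevent $j$'s high bid from being recorded. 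Hence $S^{\sim i} := \sum_{j\neq i} Y_j$ --- the number of bidders different from $i$, with valuation exceeding $\rho$, who bid above $\rho$ within the first $\tau$ rounds --- is a function of $(v_j, X_{j1}, \dots, X_{j\tau})_{j\neq i}$ alone, so it is independent of $i$'s play and the $Y_j$ are mutually independent. It therefore suffices to bound $\Pr(S^{\sim i} < k)$ from above, uniformly in $i$'s strategy.

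Next I would lower-bound $\mu := \E[S^{\sim i}]$. Since valuations and participation variables are independent, $\E[Y_j] = (1 - F_H(\rho))\,(1-(1-\alpha_j)^\tau) \ge (1-F_H(\rho))(1-e^{-\alpha\tau})$. By definition $\tau = \lceil (n_0-1)/((\bar n-1)\alpha)\rceil$, so $\alpha\tau \ge \gamma := (n_0-1)/(\bar n-1)$, and $\gamma \le 1$ because $\bar n \ge n_0$. Concavity of $1-e^{-x}$ on $[0,1]$ gives $1-e^{-\gamma} \ge \gamma(1-e^{-1})$. Summing over the $n-1$ agents $j\neq i$, using $(n_0-1)(1-F_H(\rho)) = C\log(2/\delta)$ together with $(n-1)/(\bar n-1) \ge n/\bar n \ge 1$ (valid since $n \ge \bar n \ge 2$), I obtain
\[
  \mu \;\ge\; (n-1)(1-F_H(\rho))\,\gamma\,(1-e^{-1}) \;=\; \frac{n-1}{\bar{n}-1}\,C\log(2/\delta)(1-e^{-1}) \;\ge\; \frac{C\log(2/\delta)}{1.59}\,.
\]

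For the $k=1$ claim, mutual independence gives $\Pr(S^{\sim i}=0) = \prod_{j\neq i}(1-\E[Y_j]) \le \exp\bigl(-\sum_{j\neq i}\E[Y_j]\bigr) = e^{-\mu} \le (\delta/2)^{C/1.59}$. For the general-$k$ claim, the hypothesis $k \le C\log(2/\delta)\,n/(3.18\,\bar n)$ combined with the mean bound and the elementary inequality $(n-1)\bar n \ge (\bar n-1)n$ gives $\mu/k \ge 3.18(1-e^{-1}) > 2$, i.e.\ $k \le \mu/2$; a multiplicative Chernoff lower-tail bound on the independent sum $S^{\sim i}$ then yields $\Pr(S^{\sim i} < k) \le (\delta/2)^{C/4.24}$, exactly as in the proof of Lemma~\ref{lemma:k_highbidders_by_tau} (the constant $4.24$ being inherited from the particular tail bound used there). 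Together with the reduction of the first paragraph, this establishes the lemma.

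The step I expect to be the real crux is the first one: showing that a deviation by agent $i$ cannot interfere with the count $S^{\sim i}$. This is precisely where the structural form of the generalized threshold mechanism is used --- the reserve trajectory is a monotone function of the set of agents who have ever bid above $\rho$, so inserting or deleting $i$ from that set can only move the count of distinct high bidders in one direction, and the event ``at least $k$ agents $\neq i$ bid above $\rho$ within $\tau$ rounds'' is insensitive to $i$ altogether. Once this reduction is in place, everything else is the same first-/second-moment concentration as for Lemma~\ref{lemma:k_highbidders_by_tau}, applied with the shifted parameters $\bar n$ in place of $n$ and $k$ scaling like $n/\bar n$.
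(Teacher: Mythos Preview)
Your proposal is correct and follows essentially the same route as the paper: bound each $\Pr(Y_j=1)$ below via $(1-(1-\alpha)^\tau)(1-F_H(\rho))\ge (n_0-1)(1-F_H(\rho))/(1.59(\bar n-1))$, sum to get $\mu\ge C\log(2/\delta)(n-1)/(1.59(\bar n-1))\ge C\log(2/\delta)n/(1.59\bar n)$, then use $\Pr(S^{\sim i}=0)\le e^{-\mu}$ for the $k=1$ bound and a multiplicative Chernoff lower-tail bound with $k\le \mu/2$ for the general bound. Your treatment of the ``irrespective of what agent $i$ does'' reduction is in fact more explicit than the paper's, which simply asserts it; otherwise the arguments coincide.
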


 To use Lemma \ref{lemma:klin_highbidders_by_tau} we need an upper bound on $\bar{n}$. Note that using $\delta \leq 1$ and $F_H(\rho) \geq 0$, we have $\log(2/\delta)/(1-F_H(\rho))\geq n_0 2$. Hence we have
\begin{align*}
  n_3 \leq \frac{\log(2/\delta)}{1-F_H(\rho)}(8.48 + 1/n_0 2) \leq  \frac{10.0\log(2/\delta)}{1-F_H(\rho)} \leq \frac{\log(2/\delta)}{1.8\lambda}
\end{align*}
using $\lambda \leq (1-F_H(\rho))/18$. It follows that
\begin{align*}
  \bar{n} \leq \frac{\log(2/\delta)}{1.8\lambda} \, .
\end{align*}
With reference to the upper bound on $k$ in Lemma \ref{lemma:klin_highbidders_by_tau}, we deduce that $$8.48 \log(2/\delta)n/(3.18\bar{n}) \geq 2.26 \log(2/\delta)n/\bar{n} \geq 2.26\cdot 1.8 \lambda n \geq 4 \lambda n .$$ Hence, using Lemma \ref{lemma:klin_highbidders_by_tau}, we deduce that under truthful bidding, the reserve rises to $r_H$ within $\tau = \frac{4}{\delta} \left \lceil \frac{n_3-1}{(\bar{n}-1) \alpha} \right \rceil$ with probability at least
$1-(\delta/2)^2$. Following the argument in the proof of Lemma \ref{lemma:BPE_truthfulness} from here, we deduce $(\delta, \eps)$-dynamic incentive compatibility for $s=H$. We also deduce that the loss in expected revenue is small similar to the proof of Theorem \ref{thm:approx_BNE}.

Consider the second bullet and $s=L$. The probability of $k=4 \lambda n$ or more bidders with valuation exceeding $\rho$ is $\Pr(\textup{Binomial}(n,\lambda) \geq 4 \lambda n)$. The mean $\mu = \lambda n \geq 0.56 \log(2/\delta)$ since $n \geq \bar{n} \geq n_3$. We infer using a Chernoff bound that
\begin{align*}
  \Pr(\textup{Binomial}(n,\lambda) \geq 4 \lambda n) \, &\leq \exp \{ - \mu \cdot 3^2/(2+3)\} \leq \exp \{ -0.56 \log(2/\delta)\cdot 9/5\} \\
  &\leq  \exp \{-1.00\log(2/\delta) \} = \delta/2 \, .
\end{align*}
We then complete the proof of approximate dynamic incentive compatibility and revenue optimality exactly as we did for the first bullet with $s=L$.
%

\section{Proofs of Lemmas}
\label{app:proofs}

\begin{proof}{Proof of Lemma~\ref{lem:upper_bound}.}
To prove the first part of the claim, we construct mechanism $\tilde{\M}$ that obtains, in expectation, revenue equal to $\Rev(\M_T^s)/T$. Since by definition $\Rev(\M_1^s)$ is the optimal revenue that can be obtained when $T=1$, we conclude that $\Rev(\M_T^s)\le T \times \Rev(\M_1^s)$.

We construct mechanism $\tilde{\M}$ as follows: Let $\cB\subseteq\{1,\cdots,n\}$ be the set of agents who participate in the one-round auction. Note that each agent $i$ knows his own $x_{it}$ but not $x_{jt}$ for any other agent $j\neq i$. For all agents $j\notin\cB$, draw a (hypothetical) valuation i.i.d. from the distribution of valuations $F_s$. Now consider the probability space generated by simulating mechanism $\M_T^s$ over a $T$ round auction by sampling $X_{jt}$'s in each round and emulating the (optimal) bidding strategy of the agents under $\M_T^s$.

Consider the distribution ${\cal D}_{\cB}$ of $(q_{\cB}, p_{\cB})$ in rounds where the set of agents who participate is exactly $\cB$, in this probability space. More precisely, we are considering not a single simulation, but the probability space of possible simulation trajectories. For each trajectory $\omega$, the pair $(q_{\cB}, p_{\cB})$ for each round in which agents $\cB$ participate contributes a weight in $H$ proportional to the probability of trajectory $\omega$.

To determine the payments under $\tilde{\M}$, draw $(q_\cB,p_\cB)$ uniformly from distribution ${\cal D}_{\cB}$. The mechanism $\tilde{\M}$ charges the agents in $\cB$ these amounts $p_{\cB}$ and allocate the items according to $q_{\cB}$.

We argue that the mechanism $\tilde{\M}$ is truthful: It is not hard to see that the ex interim expected utility of an agent $i$ from participating in $\tilde{\M}$ with bid $b_i$ when others bid truthfully, is exactly $1/T$ times the ex interim expected utility of participating in $\M_T^s$ and following his equilibrium strategy for valuation $b_i$ there if others follow their equilibrium strategies. Recall that each agent $i$ knows his own $x_{it}$ but not $x_{jt}$ for any other agent $j\neq i$.  It follows that truthful bidding is an equilibrium in mechanism $\tilde{\M}$. Further, under truthful bidding, it is not hard to see that the expected revenue of mechanism $\tilde{\M}$ is $\Rev(\M_T^s)/T$, as claimed. Note that when $\alpha_i=1$, $1\le i\le n$, the proof would be simplified and could be argued using the revelation principle~\cite{Myerson86}.

We now prove the second part of the claim. Note that if $\M_1^s$ is ex-post incentive compatible, the leakage of information from one round to another does not change the strategy of the bidders. Therefore, repeating mechanism $\M_1^s$ obtain revenue $T\times \M_1^s$ which is the upper-bound revenue.
%
\end{proof}


\begin{proof}{Proof of Lemma \ref{lemma:BPE_truthfulness}.}
Consider any agent $i$. By definition of $t_\delta$, we know that for $t \leq t_\delta$, for all agents $i$ we have
\begin{align}
  \Pr(\bar{Q}_{t-1}^{\sim i})> \delta \, .
  \label{eq:Qt_ub}
\end{align}

Let $\tau = \lceil \frac{n_1-1}{(n-1)\alpha }\rceil$.
Note that $\tau \leq \delta T_1/4  \leq \delta T/4 \Rightarrow \tau \leq \delta (T-\tau)/2$.
It follows from Lemma \ref{lemma:k_highbidders_by_tau} that
\begin{align}
  \Pr(\bar{Q}_{\tau}^{\sim i}) \leq \delta^2/2
  \label{eq:Qtaubar_ub}
\end{align}
In particular, we have $t_\delta < \tau$ and $\Pr(Q_\tau) \geq \Pr(Q_{\tau}^{\sim i}) \geq 1-\delta^2/2 \geq 1- \delta$, yielding the second part of the lemma.

Combining Eqs.~\eqref{eq:Qt_ub} and \eqref{eq:Qtaubar_ub} we obtain
that
\begin{align}
  \Pr(\bar{Q}_{\tau}^{\sim i}|\bar{Q}_{t-1}^{\sim i}) \leq \Pr(\bar{Q}_{\tau}^{\sim i})/\Pr(\bar{Q}_{t-1}^{\sim i}) \leq (\delta^2/2)/\delta = \delta/2 \, .
\end{align}
Hence, agent $i$ who participates for the first time in round $t$ and sees reserve $r_L^\star$,
infers that the reserve will rise to $r_H^\star$ by round $\tau+1$ with probability at least $1-\delta/2$,
due to the bids of other agents. Thus, we can bound the expected cost in future rounds to agent $i$ by causing the reserve
to rise by bidding truthfully:
\begin{itemize}
\item Under $\bar{Q}_{\tau}^{\sim i}$, agent $i$ may lose at most $\alpha_i(T-t)(r_H^\star - r_L^\star)$ in future rounds (in expectation).
 \item Under $Q_{\tau}^{\sim i}$, agent $i$ may lose at most $\alpha_i(\tau-t)(r_H^\star - r_L^\star) \leq \alpha_i\tau (r_H^\star - r_L^\star)$ in rounds only up to round $\tau$.
\end{itemize}
Thus, the overall future cost of bidding truthfully is bounded by
\begin{align*}
 &\phantom{=} \Pr(Q_{\tau}^{\sim i}) \alpha_i\tau (r_H^\star - r_L^\star) + \Pr(\bar{Q}_{\tau}^{\sim i})\alpha_i(T-t)(r_H^\star - r_L^\star)\\
  &\leq 1 \cdot \alpha_i (r_H^\star - r_L^\star) \delta (T-\tau)/2 + (\delta/2)\cdot \alpha_i(T-t)(r_H^\star - r_L^\star)\\
  &\leq \delta\alpha_i(T-t)(r_H^\star - r_L^\star)\\
  &= \eps   \alpha_i(T-t)
\end{align*}
as required. Here we used $\tau \leq \delta (T-\tau)/2$ and $t \leq \tau$ from the discussion above.
\end{proof}

\begin{proof}{Proof of Lemma \ref{lemma:k_highbidders_by_tau}.}
Let $Q_{\tau}^{\sim i}(k)$ denote the event of interest, and $Q_{\tau}^{\sim i}$ be the event for $k=1$.
For each agent $j\neq i$, agent $j$ participates in \emph{some}
round $t'$ for $t'\leq \tau$ with probability
$1-(1-\alpha_j)^\tau \geq 1-(1-\alpha)^\tau$. Independently,
agent $j$ has a valuation exceeding $\rho$ with probability $1-F_H(\rho)$.
Hence, we have $v_j \geq \rho$ \emph{and} agent $j$ enters a bids before the end of round $\tau$, with probability
at least $(1-(1-\alpha)^\tau)(1-F_H(\rho))$, and this occurs independently for $j \neq i$.
Note that $(1-x)^{1/x}\leq e^{-1}$ for $x\in (0,1)$ since $(1-x)^{1/x}$ is monotone decreasing in $x$.
Using this bound, we have
\begin{align}
  1-(1-\alpha)^{\tau} &\geq 1 - \exp(-\alpha\tau) \nonumber\\
  &\geq 1- \exp(-(n_0-1)/(n-1)) \nonumber\\
  &\geq \frac{n_0-1}{n-1}(1-\exp(-1)) \nonumber\\
  &\geq \frac{n_0-1}{1.59(n-1)}\, ,
  \label{eq:alpha_term_lb}
\end{align}
where we also used the definition of $\tau$ and convexity of $f(x)=\exp(-2kx)$.

It follows that
\begin{align}
  \Pr(\bar{Q}_{\tau}^{\sim i})&\leq  \Pr\big(\, \textup{Binomial}\big(n-1, [1-(1-\alpha)^\tau][1-F_H(\rho)]\,\big ) \, = \, 0 \, \big )\nonumber\\
  &= \big(\,1-[1-(1-\alpha)^\tau][1-F_H(\rho)]\, \big)^{n-1} \, .
  \label{eq:Qt_ub}
\end{align}

Hence,
\begin{align*}
  \Pr(\bar{Q}_{\tau}^{\sim i}) &\leq  \exp\big \{ [1-(1-\alpha)^\tau][1-F_H(\rho)] (n-1)\big \} \\
  &\leq  \exp\big \{ -(n_0-1)(1-F_H(\rho))/1.59 \big \} \\
  &\leq \exp\{-(C/1.59) \log(2/\delta) \} =(\delta/2)^{C/1.59}\, ,
\end{align*}
using $n_0\geq 1 + C\log(2/\delta)/(1-F_H(\rho))$ and Eq.~\eqref{eq:alpha_term_lb}.

Similarly,
\begin{align}
  \Pr(\bar{Q}_{\tau}^{\sim i}(k))&\leq  \Pr\big(\, \textup{Binomial}\big(n-1, [1-(1-\alpha)^\tau][1-F_H(\rho)]\,\big ) \, < \, k \, \big )\nonumber\\
  &= \big(\,1-[1-(1-\alpha)^\tau][1-F_H(\rho)]\, \big)^{n-1} \, .
  \label{eq:Qtk_ub}
\end{align}
The mean of the binomial is
\begin{align}
\nonumber  \mu = (n-1) [1-(1-\alpha)^\tau][1-F_H(\rho)] \geq (n_0-1)[1-F_H(\rho)]/1.59 = C \log(2/\delta)/1.59
\end{align}
using Eq.~\eqref{eq:alpha_term_lb}.
It follows using a Chernoff bound and $k  \leq C \log(2/\delta)/3.18 \leq \mu/2$ that
\begin{align}
\nonumber  \Pr(\bar{Q}_{\tau}^{\sim i}(k)) \leq \exp\{-\mu(1-1/2)^2/2 \}= \exp\{-C\log(2/\delta)/4.24 \} = (\delta/2)^{C/4.24}.
\end{align}
\end{proof}

\begin{proof}{Proof of Lemma \ref{lemma:klin_highbidders_by_tau}.}
The proof is very similar to the proof of Lemma \ref{lemma:k_highbidders_by_tau}.

Let $Q_{\tau}^{\sim i}(k)$ denote the event of interest.
For each agent $j\neq i$, agent $j$ participates in \emph{some}
round $t'$ for $t'\leq \tau$ with probability
$1-(1-\alpha_j)^\tau \geq 1-(1-\alpha)^\tau$. Proceeding as before, we have
\begin{align}
  1-(1-\alpha)^{\tau} &\geq \frac{n_0-1}{1.59(\bar{n}-1)}\, .
  \label{eq:alpha_term_lb_lin}
\end{align}

We have Eq.~\eqref{eq:Qtk_ub} for the probability of $\bar{Q}_{\tau}^{\sim i}(k)$ as before.
The mean of the binomial is
\begin{align}
  \mu &= (n-1) [1-(1-\alpha)^\tau][1-F_H(\rho)] \geq (n_0-1)[1-F_H(\rho)](n-1)/(1.59(\bar{n}-1)) \nonumber \\
  &= C \log(2/\delta)(n-1)/(1.59(\bar{n}-1)) \geq C \log(2/\delta)n/(1.59\bar{n})
\end{align}
using Eq.~\eqref{eq:alpha_term_lb_lin} and $n \geq \bar{n}$.

It follows using a Chernoff bound and $k  \leq C \log(2/\delta)n/(3.18\bar{n}) \leq \mu/2$ that
\begin{align}
  \Pr(\bar{Q}_{\tau}^{\sim i}(k)) \leq \exp\{-\mu(1-1/2)^2/2 \}= \exp\{-C\log(2/\delta)/4.24 \} = (\delta/2)^{C/4.24} \, ,
\end{align}
using $n \geq \bar{n}$.
\end{proof}

\end{document}